\definecolor{darkblue}{rgb}{0,0,.8}
\newtheorem{theorem}{Theorem}[section]
\newtheorem{lemma}[theorem]{Lemma}
\newtheorem{proposition}[theorem]{Proposition}
\renewcommand{\i}{\text{i}}
\newcommand{\Q}{\mathfrak{Q}}
\newcommand{\q}{\mathfrak{q}}
\newcommand{\ket}[1]{|#1\rangle}
\newcommand{\rrangle}{\rangle\hspace{-.05cm}\rangle}
\newcommand{\llangle}{\langle\hspace{-.05cm}\langle}
\title{\Large\bf On the transfer matrix of the supersymmetric eight-vertex model. I. Periodic boundary conditions}
\author{\normalsize \textsc{Christian Hagendorf} and \textsc{Jean Li\'enardy}\\
{\normalsize
  \begin{minipage}{\textwidth}
  \begin{center}
  \textit{
   Universit\'e catholique de Louvain\\
  Institut de Recherche en Math\'ematique et Physique\\
  Chemin du Cyclotron 2, 1348 Louvain-la-Neuve, Belgium} \\
    \medskip
  \href{mailto:christian.hagendorf@uclouvain.be}{\normalsize 
\texttt{christian.hagendorf@uclouvain.be}},
\href{mailto:jean.lienardy@uclouvain.be}{\normalsize 
\texttt{jean.lienardy@uclouvain.be}}
  \end{center}
  \end{minipage}
}
}
\date{\normalsize{\today}}
\begin{document}
\maketitle

\begin{abstract}
 The square-lattice eight-vertex model with vertex weights $a,b,c,d$ obeying the relation $(a^2+ab)(b^2+ab) = (c^2+ab)(d^2+ab)$ and periodic boundary conditions is considered. It is shown that the transfer matrix of the model for $L=2n+1$ vertical lines and periodic boundary conditions along the horizontal direction possesses the doubly degenerate eigenvalue $\Theta_n = (a+b)^{2n+1}$. This proves a conjecture by Stroganov from 2001. The proof uses the supersymmetry of a related XYZ spin-chain Hamiltonian.  The eigenstates of the transfer matrix corresponding to $\Theta_n$ are shown to be the ground states of the spin-chain Hamiltonian. Moreover, for positive vertex weights $\Theta_n$ is the largest eigenvalue of the transfer matrix.
\end{abstract}

\section{Introduction}
\label{sec:Introduction}

In this article, we investigate the transfer-matrix of the eight-vertex model on the square lattice with $L$ vertical lines and periodic boundary conditions along the horizontal direction \cite{baxterbook}. We focus on the case where the vertex weights $a,b,c,d$ are non-zero and related by
\begin{equation}
  (a^2+ab)(b^2+ab) = (c^2+ab)(d^2+ab).
  \label{eqn:CL8V}
\end{equation}
This special case of the eight-vertex model is known to be connected to a variety of topics such as elliptic solutions to functional equations \cite{baxter:89,fabricius:05,rosengren:16}, families of special solutions to the Painlev\'e VI equation \cite{bazhanov:05,bazhanov:06,rosengren:13,rosengren:13_2,rosengren:14,rosengren:15}, supersymmetry \cite{fendley:10,hagendorf:12,hagendorf:13} and combinatorics \cite{razumov:10,mangazeev:10,zinnjustin:13}. Because of its relation to supersymmetry, we follow Rosengren \cite{rosengren:15} and call the eight-vertex model with \eqref{eqn:CL8V} the \textit{supersymmetric eight-vertex model.}

Many of the results on the supersymmetric eight-vertex model rely on the existence of a remarkably simple eigenvalue of its transfer matrix. In fact, Stroganov \cite{stroganov:01,stroganov:01_2} conjectured in 2001 that for odd $L=2n+1,\, n\geqslant 0,$ the spectrum of the transfer matrix contains the doubly degenerate eigenvalue $\Theta_n = (a+b)^{2n+1}$. We refer to \cite{razumov:10} for a detailed description of how this was motivated by Baxter's work on the eight-vertex model. In the present article we prove Stroganov's conjecture for $n\geqslant 1$.  (The case $n=0$ is trivial.) To this end, we utilise a well-known relation between the transfer matrix of the eight-vertex model and the Hamiltonian of the XYZ spin chain. For $L$ sites and periodic boundary conditions, the spin-chain Hamiltonian is given by
\begin{equation}
   H_{\text{\tiny XYZ}} = -\frac{1}{2}\sum_{j=1}^L J_x \sigma_j^x\sigma_{j+1}^x +  J_y \sigma_j^y\sigma_{j+1}^y+ J_z \sigma_j^z\sigma_{j+1}^z, \quad \sigma^a_{L+1} = \sigma^a_1,\,a=x,y,z. 
  \label{eqn:XYZHam}
\end{equation}
Here, the $\sigma^a_j,\,a=x,y,z,$ are the standard Pauli matrices $\sigma^a$  acting on the site $j=1,\dots,L$. The real constants $J_x,J_y,J_z$ are the spin chain's anisotropy parameters.
For certain special choices of these parameters, which we discuss later, the transfer matrix of the eight-vertex model with generic weights commutes with the Hamiltonian \cite{sutherland:70,baxter:71,baxter:72_2}. They can therefore be simultaneously diagonalised. For the supersymmetric eight-vertex model, whose weights are related by \eqref{eqn:CL8V}, one such choice for the anisotropy parameters is given by
\begin{equation}
  J_x = 1+\zeta,\quad J_y = 1-\zeta,\quad J_z = \frac{1}{2}(\zeta^2-1),
  \label{eqn:CL}
\end{equation}
where
\begin{equation}
  \zeta = \frac{cd}{ab}.
\end{equation}
The XYZ Hamiltonian with anisotropy parameters \eqref{eqn:CL} possesses a lattice supersymmetry \cite{hagendorf:12}: Its restriction to a certain subsector of the spin-chain Hilbert space can be written, up to a constant, as the anticommutator of a nilpotent operator and its adjoint. These operators are called the supercharges. For odd $L=2n+1$, $n\geqslant 1$, the supersymmetry implies that there exists a two-dimensional space of special eigenstates of the 
Hamiltonian, the so-called supersymmetry singlets \cite{hagendorf:13}. We characterise this space and prove that it spans the space of the spin-chain ground states. Furthermore, we show that it is equal to the eigenspace of $\Theta_n$. We summarise our main results in the following theorem:
\begin{theorem}
  \label{thm:MainResult}
  For each $L=2n+1,\, n\geqslant 1,$ and non-zero vertex weights, the transfer matrix of the supersymmetric eight-vertex model possesses the doubly degenerate eigenvalue $\Theta_n = (a+b)^{2n+1}$. Its eigenspace is spanned by the ground states of the XYZ Hamiltonian \eqref{eqn:XYZHam} with $L=2n+1$ sites and the anisotropy parameters \eqref{eqn:CL} where $\zeta =cd/ab$.
\end{theorem}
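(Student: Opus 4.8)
The plan is to play the two structures highlighted above against each other: the commutation of the eight-vertex transfer matrix with the XYZ Hamiltonian, and the lattice supersymmetry of the latter. Denote by $T$ the transfer matrix of the supersymmetric eight-vertex model on $L=2n+1$ vertical lines. I would first recall the classical fact \cite{sutherland:70,baxter:71,baxter:72_2} that, with the identification \eqref{eqn:CL} and $\zeta=cd/ab$, one has $[T,H_{\text{\tiny XYZ}}]=0$; hence the two operators admit a common eigenbasis and $T$ preserves every eigenspace of $H_{\text{\tiny XYZ}}$, in particular its ground-state space $\mathcal G$. It then suffices to carry out three steps: (i) show that $\mathcal G$ is two-dimensional and characterise it; (ii) show that $T$ acts on $\mathcal G$ as the scalar $\Theta_n=(a+b)^{2n+1}$; (iii) show that $\Theta_n$ occurs in the spectrum of $T$ with multiplicity exactly $2$, so that $\ker(T-\Theta_n)=\mathcal G$.

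For (i) I would introduce the nilpotent supercharge $\mathcal Q$ and its adjoint on the relevant subsector $\mathcal V$ of the spin-chain Hilbert space, so that $H_{\text{\tiny XYZ}}$ restricted to $\mathcal V$ coincides, up to an additive constant $E_0$, with $\{\mathcal Q,\mathcal Q^\dagger\}$; the energy on $\mathcal V$ is then bounded below by $E_0$, with equality exactly on the supersymmetry singlets $\ker\mathcal Q\cap\ker\mathcal Q^\dagger\cong H^\ast(\mathcal Q)$. The heart of this step is to prove $\dim H^\ast(\mathcal Q)=2$ for all $L=2n+1$, $n\geqslant1$: compute the Witten index $\operatorname{Tr}_{\mathcal V}(-1)^{\mathsf F}$ (say by a recursion in $L$ or a transfer-matrix evaluation) to get a lower bound, then pin down the exact dimension by showing that all singlets lie in one fermion-parity class, or by exhibiting two explicit singlets together with a matching upper bound on the cohomology. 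One must also check that these states are \emph{global} ground states of $H_{\text{\tiny XYZ}}$ on the full Hilbert space, not merely within $\mathcal V$; I expect this to follow from the block structure of $H_{\text{\tiny XYZ}}$ relative to the decomposition defining $\mathcal V$, together with an explicit comparison of the minimal energies of the blocks.

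For (ii) I would use that $\mathcal G$ carries definite quantum numbers under the symmetries commuting with both $T$ and $H_{\text{\tiny XYZ}}$ (lattice translation, spin reversal), which diagonalises $T|_{\mathcal G}$ and reduces the claim to evaluating $T$ on a single normalised ground state. The identity $T\ket\psi=(a+b)^{2n+1}\ket\psi$ I would then establish either by embedding $T$ into Baxter's one-parameter family of commuting transfer matrices and fixing the (entire, appropriately quasi-periodic) eigenvalue function through the $TQ$-relation with the minimal $Q$-function — which is the mechanism behind Stroganov's original observation — or, more in the present spirit, by a direct computation exploiting the explicit form of $\ket\psi$ inherited from the supersymmetry. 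For (iii), any $T$-eigenvector with eigenvalue $\Theta_n$ may, by $[T,H_{\text{\tiny XYZ}}]=0$, be taken to be an $H_{\text{\tiny XYZ}}$-eigenvector; one then argues its energy equals $E_0$ — via the $TQ$-analysis, or, for positive weights, via a Perron--Frobenius argument applied within each symmetry sector followed by analytic continuation in $a,b,c,d$ to all non-zero weights — so that it lies in $\mathcal G$.

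The step I expect to be the main obstacle is the cohomology computation in (i): proving $\dim H^\ast(\mathcal Q)=2$ uniformly in $n$ needs more than the Witten index, and showing that the supersymmetry singlets exhaust the global ground-state space — rather than only the ground states inside $\mathcal V$ — is the delicate point on which the identification of the $\Theta_n$-eigenspace, and hence the whole theorem, ultimately hinges.
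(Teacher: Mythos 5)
Your skeleton coincides with the paper's: (a) any $\Theta_n$-eigenvector is forced into the ground-state space of $H_{\text{\tiny XYZ}}$, (b) that space is two-dimensional and consists of supersymmetry singlets, (c) $\mathcal T$ acts on it as the scalar $(a+b)^{2n+1}$, with a Perron--Frobenius argument showing the singlets are global (not merely sector-wise) ground states. But the two load-bearing steps are not actually supplied, and the substitutes you propose do not work. For (i), a Witten index $\operatorname{Tr}(-1)^{\mathsf F}$ is not available here: the supersymmetry is \emph{dynamic}, i.e.\ $\Q$ maps $W^L$ to $W^{L+1}$, so there is no fermion-number grading inside a fixed chain length and no single-chain index that could bound $\dim\mathcal H^L$. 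The mechanism that works is a deformation argument: an invertible diagonal operator $\mathcal M(\lambda)$ intertwines the supercharges at parameters $\zeta$ and $\lambda^{-2}\zeta$, so the (co)homology at generic $\zeta$ is isomorphic to that at $\zeta=1$, where $H$ reduces to $\sum_j(1-\sigma^x_j\sigma^x_{j+1})$ and can be diagonalised by hand. This also hands you explicit representatives $|\Phi_n(\zeta)\rangle$, $|\bar\Phi_n(\zeta^{-1})\rangle$ of the two cohomology/homology classes, which you need later.

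The more serious gap is (ii). There is no ``explicit form of $|\psi\rangle$ inherited from the supersymmetry'': the singlet is only known as $|\Psi\rangle=|\Phi\rangle+\Q|\alpha\rangle$ with $|\Phi\rangle$ explicit but $|\alpha\rangle$ completely unknown, so a direct evaluation of $\mathcal T|\Psi\rangle$ is not possible; and the $TQ$-relation route is precisely the one that has resisted rigour for odd $L$ (the construction of the $Q$-operator is problematic there), which is why Stroganov's statement remained a conjecture. The missing idea is a commutation relation between the transfer matrix and the supercharge,
\begin{equation*}
  \mathcal T\,\Q+(a+b)\,\Q\,\mathcal T=0 ,
\end{equation*}
valid exactly when $(a^2+ab)(b^2+ab)=(c^2+ab)(d^2+ab)$ and proved from a local identity for the $R$-matrix. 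Combined with $\Q|\Psi\rangle=\Q^\dagger|\Psi\rangle=0$, it implies that $\langle\Psi|\mathcal T|\Psi\rangle/\langle\Psi|\Psi\rangle$ equals the matrix element $\langle\Phi'|\mathcal T|\Phi\rangle$ between the explicit homology and cohomology representatives (one of which can be taken to be $|{\uparrow\cdots\uparrow}\rangle$), and that matrix element is an elementary combinatorial sum equal to $(a+b)^{2n+1}$. Without this relation, step (ii) has no proof. Finally, for (iii) you do not need analytic continuation in the weights: embedding $\mathcal T$ in the commuting family $\mathcal T(u)$, evaluating the putative eigenvalue equation $\mathcal T(u)|\Psi\rangle=(a(u)+b(u))^{2n+1}|\Psi\rangle$ at $u=0$ and differentiating there yields $\mathcal S|\Psi\rangle=|\Psi\rangle$ and $H_{\text{\tiny XYZ}}|\Psi\rangle=E_0|\Psi\rangle$ directly, which places every $\Theta_n$-eigenvector in the singlet space and closes the argument.
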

\noindent
Moreover, we show that if the vertex weights are positive then $\Theta_n$ is the largest eigenvalue of the transfer matrix.

An important ingredient of the proof of \cref{thm:MainResult} is a commutation relation between the transfer matrix and the supercharges. This relation was conjectured in \cite{hagendorf:12}. Here, we prove it by using a special property of the $R$-matrix of the supersymmetric eight-vertex model. The observation of this property leads us to new insights about the relation between lattice supersymmetry and quantum integrability. 

The layout of this paper is as follows: In \cref{sec:SUSY}, we revisit the supersymmetry of the XYZ Hamiltonian with periodic boundary conditions and the anisotropy parameters \eqref{eqn:CL}. In particular, we reconsider and extend the proof of the existence of the supersymmetry singlets of the spin-chain Hamiltonian. Furthermore, we show that they are indeed the ground states of the XYZ Hamiltonian. We investigate the relation between the supercharges and the transfer matrix of the periodic eight-vertex model in \cref{sec:8V}. We use this relation to compute the action of the transfer-matrix on the space of supersymmetry singlets. This allows us to prove \cref{thm:MainResult}. In \cref{sec:Conclusion}, we present our conclusions and discuss some open problems.

\section{The XYZ spin chain and supersymmetry}
\label{sec:SUSY}
In this section, we review the supersymmetry of the periodic XYZ spin chain with the anisotropy parameters \eqref{eqn:CL}. We provide a short summary of basic notations and conventions in \cref{sec:XYZ}. Furthermore, we recall a few elementary symmetries of the spin-chain Hamiltonian. In \cref{sec:LatSUSY} we recall the definition of the supercharges. We discuss the existence of the supersymmetry singlets of the Hamiltonian in \cref{sec:Cohomology} by analysing the (co)homology of the supercharges. In \cref{sec:E0States} we characterise the supersymmetry singlets and prove that they span the space of the spin chain's ground states.

The presentation here below is essentially self-contained. Many definitions and statements that we use can be found in \cite{hagendorf:12,hagendorf:13,meidinger:14,hagendorf:17}.

\subsection{The XYZ Hamiltonian and its symmetries}
\label{sec:XYZ}
\paragraph{Hilbert space.} The XYZ spin chain, described by the Hamiltonian \eqref{eqn:XYZHam}, is a model of interacting spins $1/2$. The Hilbert space of a single spin $1/2$ is $\mathbb C^2$. We denote its canonical basis by
\begin{equation}
  |{\uparrow}\rangle = \begin{pmatrix} 1 \\ 0\end{pmatrix}, \quad
  |{\downarrow}\rangle = \begin{pmatrix} 0 \\ 1\end{pmatrix}.
\end{equation}
The Hilbert space of the spin chain with $L$ sites is given by $V^L = V_1\otimes V_2 \otimes \cdots \otimes V_L$ where $V_j=\mathbb C^2$ is a local copy of the single-spin Hilbert space. The canonical orthonormal basis of $V^L$ is given by the set of all states 
\begin{equation}
 |s_1s_2\cdots s_L\rangle = |s_1\rangle \otimes |s_2\rangle \otimes \cdots \otimes |s_L\rangle,
  \label{eqn:BasisVectors}
\end{equation}
where each $s_j$ is either $\uparrow$ (spin up) or $\downarrow$ (spin down). 
Throughout this article, we use the standard complex scalar product on $V^L$. We write $\langle \psi|\psi'\rangle$ for the scalar product of two states $|\psi\rangle,|\psi'\rangle \in V^L$, where $\langle \psi| = | \psi\rangle^\dagger$.

The spin operators on $\mathbb C^2$ are given by the standard Pauli matrices
\begin{equation}
  \sigma^x =
  \begin{pmatrix}
  0 & 1\\
  1 & 0
  \end{pmatrix}
  ,
  \quad
  \sigma^y =
  \begin{pmatrix}
  0 & -\i\\
  \i & 0
  \end{pmatrix}
  ,
  \quad
  \sigma^{z} =
  \begin{pmatrix}
  1 & 0\\
  0 & -1
  \end{pmatrix}.
\end{equation}
We denote by $\sigma_j^a,\,a=x,y,z$ and $j=1,\dots, L$, the matrix $\sigma^a$ acting on the $j$-th factor of the tensor product \eqref{eqn:BasisVectors}.

\paragraph{Symmetry operators.} The XYZ Hamiltonian \eqref{eqn:XYZHam} is Hermitian and therefore diagonalisable. Below, we focus on certain special eigenstates. The analysis of these eigenstates uses a few simple symmetries of the Hamiltonian that we discuss now.

We start this discussion by considering its invariance under translations. The translation operator $\mathcal S$ acts on the basis of $V^L$ according to
\begin{equation}
  \mathcal S| s_1\cdots s_{L-1} s_L\rangle = | s_L s_1\cdots s_{L-1}\rangle.
\end{equation}
The translation invariance of the Hamiltonian is expressed through the commutation relation
\begin{equation}
  [H_{\text{\tiny XYZ}},\mathcal S]=0.
\end{equation}
The operator $\mathcal S$ is unitary. Therefore, it is diagonalisable. The Hilbert space $V^L$ is the direct sum of the corresponding eigenspaces. In the following, we will be particularly interested in the eigenstates of $\mathcal S$ with eigenvalue $(-1)^{L+1}$. We follow the terminology of \cite{meidinger:14} and call them \textit{alternate-cyclic states}. We denote by $W^L$ the corresponding eigenspace.

Furthermore, we note that the Hamiltonian preserves the spin parity:
\begin{equation}
  [H_{\text{\tiny XYZ}},\mathcal P]=0,\quad \mathcal P = (-1)^L \sigma_1^z\sigma_2^z \cdots \sigma_L^z.
\end{equation}
Each basis state \eqref{eqn:BasisVectors} is an eigenstate of the spin-parity operator $\mathcal P$. The corresponding eigenvalue is the parity of the number of spins up. The spin-parity invariance of the Hamiltonian allows one to look for eigenstates of $H_{\text{\tiny XYZ}}$ in sectors where this parity is fixed to $+1$ or $-1$.

Finally, the Hamiltonian is invariant under spin reversal:
\begin{equation}
  [H_{\text{\tiny XYZ}},\mathcal R]=0,\quad \mathcal R = \sigma_1^x\sigma_2^x \cdots \sigma_L^x.
\end{equation}
The spin-parity and spin-reversal operators have the commutation relation $\mathcal R \mathcal P = (-1)^L \mathcal P \mathcal R$. In particular, they anticommute for odd $L$. This implies that each eigenvalue of $H_{\text{\tiny XYZ}}$ has an even degeneracy for chains of odd length.

\subsection{Lattice supersymmetry}
\label{sec:LatSUSY}
From now on, we focus on the case where the anisotropy parameters are given by \eqref{eqn:CL}. For this choice, the XYZ Hamiltonian possesses a lattice supersymmetry on the subspace of alternate-cyclic states \cite{hagendorf:12,hagendorf:13}. The corresponding symmetry operators are the so-called supercharge $\Q$ and its adjoint $\Q^\dagger$. 

\paragraph{Supercharges.} The supercharge $\Q$ is constructed from an operator $\q$ that we call the \textit{local supercharge}. Its action on the basis states of the single-spin Hilbert space is given by \cite{hagendorf:12}
\begin{equation}
  \q|{\uparrow}\rangle = 0, \quad \q|{\downarrow}\rangle = |{\uparrow\uparrow}\rangle - \zeta|{\downarrow\downarrow}\rangle,
  \label{eqn:DefLocalQ}
\end{equation}
where $\zeta$ is a real \textit{non-zero} parameter.\footnote{ The value $\zeta = \frac{cd}{ab}=0$ corresponds to the cases where $c=0$ or $d=0$, which we exclude.} Using $\q$, we define local operators $\q_0,\q_1,\dots,\q_L$ that map the Hilbert space of a chain of length $L$ to the Hilbert space of a chain of length $L+1$. For $j=1,\dots,L$, we set
\begin{subequations}
\begin{equation}
  \q_j = \underset{j-1}{\underbrace{1\otimes \cdots \otimes 1}}\otimes \q \otimes \underset{L-j}{\underbrace{1\otimes \cdots \otimes 1}}.
\end{equation}
Furthermore, we define
\begin{equation}
  \q_{0} = \mathcal S^{-1} \q_{1} \mathcal S = \mathcal S\q_L.
\end{equation}%
\label{eqn:DefineLocalQ}%
\end{subequations}%

The supercharge $\Q$ is a length-increasing operator\footnote{ In related works, such as \cite{hagendorf:13,meidinger:14}, the supercharge is often denoted by $\mathfrak Q_L$. The subscript indicates that it acts on $V^L$. We omit these subscripts for the supercharge (and other operators) in order to keep the notations as simple as possible. If necessary, we explicitly indicate which space $\Q$ acts on.} that maps $V^L$ to $V^{L+1}$ for each $L\geqslant 1$. We define it through its action on the eigenspaces of the translation operator $\mathcal S$ in $V^L$. On the eigenspace of alternate-cyclic states $W^L$, the supercharge acts as the alternating sum
\begin{equation}
  \Q= \sqrt{\frac{L}{L+1}}\sum_{j=0}^L (-1)^j\q_j.
  \label{eqn:DefQonWN}
\end{equation}
On every other eigenspace of the translation operator we define the supercharge to be zero. One checks \cite{hagendorf:13} that the supercharge maps $W^{L}$ to $W^{L+1}$. 

We define the adjoint of the supercharge $\Q^\dagger$ by means of the scalar product of the spin-chain Hilbert space. It satisfies
\begin{equation}
  \langle \psi|(\Q^\dagger|\phi\rangle) = \langle\phi|(\Q|\psi\rangle)^\ast
\end{equation}
for all $|\phi\rangle \in V^L,\,|\psi\rangle \in V^{L-1}$, $L\geqslant 2$. 
It follows from this definition that the action of the adjoint supercharge on the eigenspaces of the translation operator $\mathcal S$ in $V^L$ is non-zero only on $W^L$. Furthermore, $\Q^\dagger$ maps $W^L$ to $W^{L-1}$.

One can show that the supercharge and its adjoint are nilpotent operators:
\begin{equation}
  \Q^2 = 0, \quad (\Q^\dagger)^2 =0.
  \label{eqn:Nilpotency}
\end{equation}
This means that the operators $\Q^2:V^{L} \to V^{L+2},\, L\geqslant 1,$ and $(\Q^\dagger)^2:V^L \to V^{L-2}, \, L \geqslant 3,$ yield zero on every state of $V^L$. This is trivial on the subspace of $V^L$ spanned by the states that are not alternate-cyclic. Conversely, on $W^L$ it can be shown \cite{hagendorf:12} by using the definition of the local supercharge \eqref{eqn:DefLocalQ}.

\paragraph{Hamiltonian.} The supercharge and its adjoint allow us to define a Hamiltonian
\begin{equation}
  H = \Q\Q^\dagger +\Q^\dagger\Q.
  \label{eqn:HViaQ}
\end{equation}
This Hamiltonian is a length-preserving operator unlike $\Q$ and $\Q^\dagger$. It follows from their definition that the action of $H$ yields zero on all eigenspaces of the translation operator in $V^L$ that are not equal to the subspace of alternate-cyclic states $W^L$. Conversely, the restriction of $H$ to $W^L$ is non-trivial\cite{hagendorf:12}. Up to a multiple of the identity matrix it is equal to the Hamiltonian of the XYZ spin chain \eqref{eqn:XYZHam} with special anisotropy parameters. Indeed, we have
\begin{equation}
  H = H_{\text{\tiny XYZ}}-E_0 \quad \text{on }W^L,
  \label{eqn:XYZHamSusy}
\end{equation}
provided that $J_x,J_y,J_z$ are given by \eqref{eqn:CL} and $E_0$ is set to
\begin{equation}
  E_0 = -\frac{L}{4}(3+\zeta^2).
  \label{eqn:E0Value}
\end{equation}

The relation \eqref{eqn:XYZHamSusy} between $H$ and the XYZ Hamiltonian implies that $H$ commutes with the spin-parity and spin-reversal operators $[H,\mathcal P] = [H,\mathcal R]=0$. Furthermore, it follows from the nilpotency \eqref{eqn:Nilpotency} of both $\Q$ and $\Q^\dagger$ that the following commutation relations hold:
\begin{equation}
  H\Q = \Q H, \quad H\Q^\dagger = \Q^\dagger H.
  \label{eqn:CommRelQH}
\end{equation}
Hence the supercharges are symmetry operators: The Hamiltonian $H$ is supersymmetric. Since the Hamiltonians on the left- and right-hand sides of the equalities in \eqref{eqn:CommRelQH} act on the Hilbert spaces of spin chains whose length differs by one, this supersymmetry is called dynamic. 
Because of \eqref{eqn:XYZHamSusy} we conclude that the restriction of the XYZ Hamiltonian to $W^L$ with the anisotropy parameters \eqref{eqn:CL} has a dynamic lattice supersymmetry.

\subsection{(Co)homology}
\label{sec:Cohomology}
The definition \eqref{eqn:HViaQ} implies that $H$ is a Hermitian operator and can therefore be diagonalised. Its eigenvalues are non-negative. If it possesses the eigenvalue $E=0$, then the corresponding eigenstates are the ground states of this Hamiltonian. We call them \textit{supersymmetry singlets} or \textit{zero-energy states.} They are the non-zero solutions of
\begin{equation}
  \Q|\Psi\rangle = 0, \quad \Q^\dagger |\Psi\rangle =0.
  \label{eqn:QOnGS}
\end{equation}
It follows from the definition of $\Q$ and $\Q^\dagger$ that these equations have many trivial solutions. Indeed, all eigenstates of the translation operator $\mathcal S$ that are not alternate-cyclic are zero-energy states. In the following, we focus on the alternate-cyclic zero-energy states. If they exist, then they are the ground states of the XYZ Hamiltonian with the anisotropy parameters \eqref{eqn:CL}, restricted to $W^L$. The corresponding eigenvalue is $E_0 = -L(3+\zeta^2)/4$.

Proving the absence or the existence of alternate-cyclic zero-energy states is a non-trivial problem except for a few special values of the parameter $\zeta$. One such special value is $\zeta=1$, where the Hamiltonian reduces to
\begin{equation}
  H = \sum_{j=1}^L (1-\sigma_j^x\sigma_{j+1}^x)\quad \text{on}\quad W^L.
\end{equation}
Its diagonalisation is elementary. It reveals that $H$ possesses no alternate-cyclic zero-energy states for even $L=2n$. Conversely, for odd $L=2n+1$, the subspace of zero-energy states in $W^L$ is two-dimensional. One basis of this eigenspace is given by
\begin{subequations}
\begin{align}
  |\Phi_n\rangle = \frac{1}{2}(1+\mathcal P)\sum_{s_1=\uparrow,\downarrow}\cdots \sum_{s_{2n+1}=\uparrow,\downarrow}|s_1\cdots s_{2n+1}\rangle,\\ |\bar \Phi_n\rangle = \frac{1}{2}(1-\mathcal P)\sum_{s_1=\uparrow,\downarrow}\cdots \sum_{s_{2n+1}=\uparrow,\downarrow}|s_1\cdots s_{2n+1}\rangle.
\end{align}%
\label{eqn:DefPhiTrivial}%
\end{subequations}%
These basis states have a definite spin parity and can be mapped onto each other through spin reversal:
\begin{equation}
  \label{eqn:PROnPhi}
  \mathcal P|\Phi_n\rangle =  |\Phi_n\rangle, \quad \mathcal P|\bar \Phi_n\rangle = - |\bar \Phi_n\rangle, \quad \mathcal R|\Phi_n\rangle = |\bar \Phi_n\rangle.
\end{equation}

For generic values of $\zeta$, the explicit diagonalisation of the Hamiltonian $H$ is non-trivial. Nonetheless, it is possible to prove the absence or the existence of alternate-cyclic zero-energy states by means of the supersymmetry. A proof can be found in \cite{hagendorf:13}. For completeness, we revisit this proof and extend it here below. We start our discussion with the following simple observation:
\begin{lemma}
  \label{lem:Impossible}
  A zero-energy state cannot be in the image of $\Q$ or $\Q^\dagger$.
  \begin{proof}
  Let us prove that a zero-energy state cannot be in the image of $\Q$. By contradiction, we suppose that $|\Psi\rangle = \Q|\alpha\rangle$ is a zero-energy state. From \eqref{eqn:QOnGS}, we infer $\Q^\dagger\Q|\alpha\rangle=\Q^\dagger|\Psi\rangle=0$. We take the scalar product of this equality with $|\alpha\rangle$ and find $\langle \alpha|\Q^\dagger\Q|\alpha\rangle = ||\Q|\alpha\rangle||^2=0$, which implies $|\Psi\rangle = \Q|\alpha\rangle = 0$. This contradicts the definition of a zero-energy state, which has to be non-zero, and therefore proves the claim.
  
  The proof that a zero-energy state cannot be in the image of $\Q^\dagger$ is similar.
  \end{proof}
\end{lemma}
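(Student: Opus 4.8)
The plan is to argue by contradiction, using nothing more than the definition \eqref{eqn:QOnGS} of a zero-energy state, the defining relation of the adjoint supercharge introduced in \cref{sec:LatSUSY}, and the positive-definiteness of the scalar product on the spin-chain Hilbert space. Conceptually this is the lattice analogue of the Hodge-theoretic fact that a harmonic representative is orthogonal to every coboundary and every cocycle.

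First I would suppose that some zero-energy state $|\Psi\rangle \in V^L$ lies in the image of $\Q$, writing $|\Psi\rangle = \Q|\alpha\rangle$ with $|\alpha\rangle \in V^{L-1}$. Since $|\Psi\rangle$ is a zero-energy state, \eqref{eqn:QOnGS} gives $\Q^\dagger|\Psi\rangle = 0$, that is $\Q^\dagger\Q|\alpha\rangle = 0$. Taking the scalar product of this identity with $|\alpha\rangle$ and applying the adjoint relation turns the left-hand side into $\langle\alpha|\Q^\dagger\Q|\alpha\rangle = \langle \Q\alpha|\Q\alpha\rangle = \||\Psi\rangle\|^2$, so $\||\Psi\rangle\| = 0$. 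This forces $|\Psi\rangle = 0$, contradicting the requirement that a zero-energy state be non-zero.

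The remaining case, $|\Psi\rangle$ in the image of $\Q^\dagger$, is handled by the mirror-image argument: write $|\Psi\rangle = \Q^\dagger|\beta\rangle$ with $|\beta\rangle \in V^{L+1}$, use $\Q|\Psi\rangle = 0$ from \eqref{eqn:QOnGS} to obtain $\Q\Q^\dagger|\beta\rangle = 0$, and pair with $|\beta\rangle$ to get $\langle\beta|\Q\Q^\dagger|\beta\rangle = \|\Q^\dagger|\beta\rangle\|^2 = \||\Psi\rangle\|^2 = 0$, again forcing $|\Psi\rangle = 0$.

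I do not anticipate a genuine obstacle: the statement is essentially a positivity argument and the two cases are symmetric. The one point that deserves a little care is the bookkeeping of Hilbert spaces, since $\Q$ and $\Q^\dagger$ change the chain length by one; the pairings $\langle\alpha|\Q^\dagger\Q|\alpha\rangle$ and $\langle\beta|\Q\Q^\dagger|\beta\rangle$ are to be read in $V^{L-1}$ and $V^{L+1}$, respectively, and one must invoke the adjoint relation for the appropriate pair of spaces. Note that the nilpotency relations \eqref{eqn:Nilpotency} are not actually needed for this lemma — they enter only in the (co)homology computations that build on it.
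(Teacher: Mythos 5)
Your proof is correct and follows essentially the same argument as the paper: assume the state is in the image, use $\Q^\dagger|\Psi\rangle=0$ (resp. $\Q|\Psi\rangle=0$), and pair with the preimage to conclude the norm vanishes. The paper likewise treats the $\Q$ case in detail and dismisses the $\Q^\dagger$ case as similar, exactly as you spell out.
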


This lemma suggests that the space of zero-energy states could be related to the kernel of $\Q$ or $\Q^\dagger$ modulo their respective images. This is indeed the case. To explain this relation, we recall some facts from supersymmetric quantum mechanics \cite{witten:82,hori:03} and (co)homology theory \cite{loday:92,masson:08}.

For each $L\geqslant 2$ the space of zero-energy states in $W^L$ is in bijection with the quotient space
\begin{equation}
  \mathcal H^L = \frac{\text{ker}\{\Q:W^L\to W^{L+1}\}}{\text{im}\{\Q:W^{L-1}\to W^{L}\}}.
\end{equation}
It is sometimes useful to define $\mathcal H^1 = \text{ker}\{\Q:W^1\to W^{2}\}$. The direct sum $\bigoplus_{L=1}^\infty \mathcal H^L$ is then called the cohomology of the supercharge. The elements of $\mathcal H^L$ are equivalence classes. Any such equivalence class can be represented by a state $|\Phi\rangle \in \text{ker}\{\Q:W^L\to W^{L+1}\}$, called a \textit{representative}. Conversely, given a state $|\Phi\rangle$ that is annihilated by the supercharge, we denote by $[|\Phi\rangle]$ the corresponding equivalence class. Notice that $[|\Phi\rangle + \Q|\Phi'\rangle] = [|\Phi\rangle]$. Hence, representatives are not unique. Furthermore, one can show \cite{witten:82} that if $|\Phi\rangle$ is the representative of a non-zero element of $\mathcal H^L$ then there is a state $|\alpha\rangle\in W^{L-1}$ such that  
\begin{equation}
  |\Psi\rangle = |\Phi\rangle + \Q|\alpha\rangle
  \label{eqn:CohomRep}
\end{equation}
is an alternate-cyclic zero-energy state. Conversely, each alternate-cyclic zero-energy state $|\Psi\rangle$ can be decomposed as the sum \eqref{eqn:CohomRep} of a representative of a non-zero element of $\mathcal H^L$ and a state that is in the image of $\Q$.

The (formal) symmetry of the Hamiltonian under the exchange of the supercharge and its adjoint suggests that we could as well have considered
\begin{equation}
  \mathcal H_L = \frac{\text{ker}\{\Q^\dagger:W^L\to W^{L-1}\}}{\text{im}\{\Q^\dagger:W^{L+1}\to W^{L}\}}
\end{equation}
for $L\geqslant 2$. Indeed, the space of alternate-cyclic zero-energy states is isomorphic to $\mathcal H_L$ for each $L\geqslant 2$, too. Furthermore, we define $\mathcal H_1=W^1/\text{im}\{\Q^\dagger :W^2\to W^1\}$. The direct sum $\bigoplus_{L=1}^{\infty} \mathcal H_L$ is called the homology of the adjoint supercharge. The elements of $\mathcal H_L$  are also equivalence classes. They can be represented by states $|\Phi'\rangle\in\text{ker}\{\Q^\dagger:W^L\to W^{L-1}\}$. As above, we denote by $[|\Phi'\rangle]$ the equivalence class of such a state.\footnote{We use the same notation for the equivalence classes of $\mathcal H^L$ and $\mathcal H_L$.} One can show \cite{witten:82} that if $|\Phi'\rangle$ represents an non-zero element of $\mathcal H_L$ then there is a state $|\beta\rangle\in W^{L+1}$ such that
\begin{equation}
  |\Psi\rangle = |\Phi'\rangle + \Q^\dagger|\beta\rangle
  \label{eqn:HomRep}
\end{equation}
is an alternate-cyclic zero-energy state. Conversely, each alternate-cyclic zero-energy state $|\Psi\rangle$ can be decomposed as the sum \eqref{eqn:HomRep} of a representative of a non-zero element of $\mathcal H_L$ and a state that is in the image of $\Q^\dagger$. 

\paragraph{Conjugation.} We conclude that to prove the (non-)existence of alternate-cyclic zero-energy states of the Hamiltonian $H$ for $L$ sites, it is sufficient to find $\mathcal H^L$ or $\mathcal H_L$.  We now compute these spaces for each $L\geqslant 2$. In order to stress their dependence on $\zeta$, we write $\mathcal H^L = \mathcal H^L(\zeta)$ and $\mathcal H_L = \mathcal H_L(\zeta)$. 

Let us consider first the case where $\zeta = 1$. The explicit diagonalisation of the Hamiltonian shows that
\begin{subequations}
\begin{align}  
 &\mathcal H^{2n}(\zeta=1) = 0, \quad \mathcal H^{2n+1}(\zeta=1) = \mathbb C[|\Phi_n\rangle] \oplus \mathbb C[|\bar \Phi_n\rangle],\\
  &\mathcal H_{2n}(\zeta=1) = 0, \quad \mathcal H_{2n+1}(\zeta=1) = \mathbb C[|\Phi_n\rangle] \oplus \mathbb C[|\bar \Phi_n\rangle],
\end{align}%
  \label{eqn:DefHTrivial}%
\end{subequations}%
for each $n\geqslant 1$ where $|\Phi_n\rangle,\,|\bar \Phi_n\rangle$ are the states defined in \eqref{eqn:DefPhiTrivial}.

The corresponding results for generic values of $\zeta$ can be inferred from \eqref{eqn:DefHTrivial}. To this end, we introduce an operator $m(\lambda)$ whose action on the basis states of the single-spin Hilbert space is given by
\begin{equation}
  m(\lambda)|{\uparrow}\rangle = \lambda|{\uparrow}\rangle,
  \quad m(\lambda)|{\downarrow}\rangle = \lambda^2|{\downarrow}\rangle.
\end{equation}
The operator $m(\lambda)$ and the local supercharge $\q = \q(\zeta)$ satisfy the relation
\begin{equation}
  \left(m(\lambda)\otimes m(\lambda)\right)\q(\lambda^{-2} \zeta)  =  \q(\zeta)
   m(\lambda).
   \label{eqn:LocalConj}
\end{equation}
On $V^L$, we define the operator $ \mathcal M(\lambda) = m_1(\lambda)m_2(\lambda)\cdots m_L(\lambda)$ where $m_j(\lambda)$ is $m(\lambda)$ acting on the $j$-th factor of the tensor product \eqref{eqn:BasisVectors}. $\mathcal M(\lambda)$ preserves $W^L$ and is invertible for $\lambda\neq 0$. Let us write $\Q(\zeta)$ and $\Q(\zeta)^\dagger$ for the supercharge and its adjoint in order to stress their dependence on the parameter $\zeta$. It follows from \eqref{eqn:LocalConj} that
\begin{subequations}
\begin{align}
  \mathcal M(\lambda)\Q(\lambda^{-2}\zeta) &= \Q(\zeta)\mathcal M(\lambda),\\
  \mathcal M(\lambda^{-1})\Q(\lambda^{-2}\zeta)^\dagger &= \Q(\zeta)^\dagger\mathcal M(\lambda^{-1}).
\end{align}
  \label{eqn:ConjQ}%
\end{subequations}
Hence, for non-zero $\lambda$ one may relate the (adjoint) supercharges with parameters $\lambda^{-2}\zeta$ and $\zeta$ by conjugation with an invertible mapping. 
This conjugation property implies \cite{witten:82} that the following mappings are bijections:
\begin{subequations}
\begin{align}
  &\mathcal M^\sharp(\lambda):\mathcal H^{L}(\lambda^{-2}\zeta)\to \mathcal H^{L}(\zeta), \quad \mathcal M^{\sharp}(\lambda)[|\Phi\rangle] = [\mathcal M(\lambda)|\Phi\rangle],\\
  &\mathcal M_\sharp(\lambda):\mathcal H_{L}(\lambda^{-2}\zeta)\to \mathcal H_{L}(\zeta), \quad \mathcal M_{\sharp}(\lambda)[|\Phi'\rangle] = [\mathcal M(\lambda^{-1})|\Phi'\rangle].
\end{align}%
\label{eqn:Bijections}%
\end{subequations}%
The existence of these bijections was observed in \cite{hagendorf:13}. It implies that $\dim \mathcal H^L(\zeta)=\dim \mathcal H^L(\lambda^{-2}\zeta)$ and $\dim \mathcal H_L(\zeta)=\dim \mathcal H_L(\lambda^{-2}\zeta)$ for each $L\geqslant 1$. This allows one to compute the dimension of the space of alternate-cyclic zero-energy states as a function of the number of sites. Here, we extend the work of \cite{hagendorf:13} and use the bijections to explicitly compute $\mathcal H^{L}(\zeta)$ and $\mathcal H_{L}(\zeta)$ for non-zero $\zeta$. For $\zeta > 0$, we introduce the states
 \begin{equation}
  |\Phi_n(\zeta)\rangle = \zeta^{-(n+1)}\mathcal M(\zeta^{1/2})|\Phi_n\rangle, \quad  |\bar \Phi_n(\zeta)\rangle = \zeta^{-(n+1/2)}\mathcal M(\zeta^{1/2})|\bar \Phi_n\rangle.
  \label{eqn:DefPhi}
\end{equation}%
These states are polynomials in $\zeta$. Furthermore, we infer from \eqref{eqn:ConjQ} that they satisfy
\begin{subequations}
\begin{alignat}{2}
  & \Q(\zeta)|\Phi_n(\zeta)\rangle = 0, \quad && \Q(\zeta)|\bar \Phi_n(\zeta)\rangle = 0,\\
  & \Q^\dagger(\zeta)|\Phi_n(\zeta^{-1})\rangle = 0, \quad  && \Q^\dagger(\zeta)|\bar \Phi_n(\zeta^{-1})\rangle = 0.
\end{alignat}%
  \label{eqn:QOnPhi}%
\end{subequations}%
It follows from \eqref{eqn:Bijections} that for $\zeta>0$ we have \begin{subequations}
\begin{align}
  \mathcal H^{2n}(\zeta) &= 0, \quad \mathcal H^{2n+1}(\zeta) = \mathbb C[|\Phi_n(\zeta)\rangle] \oplus \mathbb C[|\bar \Phi_n(\zeta)\rangle], \\
  \mathcal H_{2n}(\zeta) &= 0, \quad \mathcal H_{2n+1}(\zeta) = \mathbb C[|\Phi_n(\zeta^{-1})\rangle] \oplus \mathbb C[|\bar \Phi_n(\zeta^{-1})\rangle].
\end{align}
\label{eqn:HomCohomQ}%
\end{subequations}%
The polynomiality of the states defined in \eqref{eqn:DefPhi} allows us to extend these relations to generic but non-zero values of $\zeta$.

Our construction of $\mathcal H^L(\zeta)$ and $\mathcal H_L(\zeta)$  clearly fails if $\zeta=0$ (which is the reason for requiring that $\zeta$ be non-zero). Indeed, in this case the conjugation relation \eqref{eqn:ConjQ} implies that the supercharges commute with the operator $\mathcal M(\lambda)$ for any finite $\lambda$. However, the commutation relation does not allow us to establish a relation between $\mathcal H^L(\zeta = 0)$ and $\mathcal H^L(\zeta=1)$, nor between $\mathcal H_L(\zeta = 0)$ and $\mathcal H_L(\zeta=1)$.
\subsection{Zero-energy states}
\label{sec:E0States}
We now use \eqref{eqn:HomCohomQ} in order to characterise the space of alternate-cyclic zero-energy states of the Hamiltonian $H$.
\begin{theorem}
\label{prop:CohomRep}
For each $n\geqslant 1$, the Hamiltonian \eqref{eqn:HViaQ} with $L=2n$ does not possess alternate-cyclic zero-energy states. If $L=2n+1$, then the space of alternate-cyclic zero-energy states is spanned by
  \begin{equation}
    |\Psi_n\rangle = \lambda_n|\Phi_n(\zeta)\rangle + \Q|\alpha_n\rangle,\quad |\bar\Psi_n\rangle = \bar \lambda_n|\bar \Phi_n(\zeta)\rangle + \Q|\bar \alpha_n\rangle,
    \label{eqn:CohomRepE0States}
  \end{equation}
where $|\alpha_n\rangle,|\bar \alpha_n\rangle \in W^{2n}$. The constants $\lambda_n,\bar \lambda_n$ are non-zero and given by
  \begin{equation}
    \lambda_n = \frac{1}{4^n}\langle \Phi_n(\zeta^{-1})|\Psi_n\rangle, \quad  \bar \lambda_n = \frac{1}{4^n}\langle \bar \Phi_n(\zeta^{-1})|\bar \Psi_n\rangle.
  \end{equation}
   \begin{proof}
    The absence and existence of the alternate-cyclic zero-energy states in $W^{2n}$ and $W^{2n+1}$, respectively, follow from \eqref{eqn:HomCohomQ}. If $L=2n+1$, then the decompositions \eqref{eqn:CohomRepE0States} are a consequence of \eqref{eqn:CohomRep}.
       
   The constants $\lambda_n,\bar \lambda_n$ have to be non-zero because otherwise the zero-energy states would be in the image of the supercharge, which is impossible because of \cref{lem:Impossible}. In order to find $\lambda_n$ we compute the scalar product
   \begin{equation}
     \langle \Phi_n(\zeta^{-1})|\Psi_n\rangle = \lambda_n  \langle \Phi_n(\zeta^{-1})|\Phi_n(\zeta)\rangle +  \langle \Phi_n(\zeta^{-1})|\Q|\alpha_n\rangle .
   \end{equation}
    The first term on the right-hand side of this equality is $4^n \lambda_n$. The second term vanishes because of \eqref{eqn:QOnPhi}. This leads to $\lambda_n=\frac{1}{4^n} \langle \Phi_n(\zeta^{-1})|\Psi_n\rangle$. The computation of $\bar \lambda_n$ is similar.
      \end{proof}
\end{theorem}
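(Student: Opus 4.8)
The plan is to read the statement off the (co)homology computation \eqref{eqn:HomCohomQ} together with the correspondence between alternate-cyclic zero-energy states and cohomology classes recalled around \eqref{eqn:CohomRep}, and then to fix the normalisation constants $\lambda_n,\bar\lambda_n$ by a single scalar-product calculation. For $L=2n$ the space of alternate-cyclic zero-energy states is in bijection with $\mathcal H^{2n}(\zeta)$, which vanishes by \eqref{eqn:HomCohomQ}; hence there are none, which settles the first assertion. For $L=2n+1$ the same bijection and \eqref{eqn:HomCohomQ} show that the zero-energy space is two-dimensional, with $\mathcal H^{2n+1}(\zeta)=\mathbb C[|\Phi_n(\zeta)\rangle]\oplus\mathbb C[|\bar\Phi_n(\zeta)\rangle]$.

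Next I would produce the announced basis. Applying \eqref{eqn:CohomRep} to the representative $|\Phi_n(\zeta)\rangle$ gives an alternate-cyclic zero-energy state of the form $|\Phi_n(\zeta)\rangle+\Q|\alpha_n\rangle$ with $|\alpha_n\rangle\in W^{2n}$, and likewise for $|\bar\Phi_n(\zeta)\rangle$; since the correspondence with cohomology is a bijection these two states are linearly independent and hence span the zero-energy space, and after rescaling they take the form \eqref{eqn:CohomRepE0States} with constants $\lambda_n,\bar\lambda_n$. (One may organise this by spin parity: $\mathcal M(\lambda)$ is diagonal in the canonical basis and thus commutes with $\mathcal P$, so $|\Phi_n(\zeta)\rangle$ and $|\bar\Phi_n(\zeta)\rangle$ carry the parities $+1$ and $-1$ inherited from \eqref{eqn:PROnPhi}; one also checks that $\Q$ commutes with $\mathcal P$, so $|\Psi_n\rangle$ and $|\bar\Psi_n\rangle$ may be taken with definite spin parity, and they are interchanged by $\mathcal R$ in accordance with the even-degeneracy remark for odd $L$.) The constants cannot vanish: were $\lambda_n=0$, then $|\Psi_n\rangle=\Q|\alpha_n\rangle\in\text{im}\,\Q$, contradicting \cref{lem:Impossible}; similarly for $\bar\lambda_n$.

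Finally I would compute $\lambda_n$ by pairing \eqref{eqn:CohomRepE0States} with $|\Phi_n(\zeta^{-1})\rangle$. The cross term vanishes because $\Q^\dagger|\Phi_n(\zeta^{-1})\rangle=0$ by \eqref{eqn:QOnPhi}, so that the defining relation of the adjoint gives $\langle\Phi_n(\zeta^{-1})|\Q|\alpha_n\rangle=\big(\langle\alpha_n|\Q^\dagger|\Phi_n(\zeta^{-1})\rangle\big)^\ast=0$. For the remaining term, the definitions \eqref{eqn:DefPhi} together with $\mathcal M(\lambda)\mathcal M(\mu)=\mathcal M(\lambda\mu)$ and $\mathcal M(\lambda)^\dagger=\mathcal M(\lambda)$ for real $\lambda$ give $\langle\Phi_n(\zeta^{-1})|\Phi_n(\zeta)\rangle=\langle\Phi_n|\Phi_n\rangle$, and a one-line count on \eqref{eqn:DefPhiTrivial} gives $\langle\Phi_n|\Phi_n\rangle=\tfrac{1}{2}\cdot2^{2n+1}=4^n$ (the cross term involving $\mathcal P$ drops because $\sum_{s}(-1)^{\#\downarrow(s)}=(1-1)^{2n+1}=0$ for $L$ odd). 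Hence $\langle\Phi_n(\zeta^{-1})|\Psi_n\rangle=4^n\lambda_n$, i.e.\ $\lambda_n=4^{-n}\langle\Phi_n(\zeta^{-1})|\Psi_n\rangle$, and the computation of $\bar\lambda_n$ is identical. For $\zeta<0$, where $\zeta^{1/2}$ in \eqref{eqn:DefPhi} is merely formal, the identity is extended by the polynomiality of the $\Phi$-states noted after \eqref{eqn:DefPhi}.

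I do not expect a genuine obstacle: the conceptual content — the bijection between zero-energy states and cohomology, and the value of $\mathcal H^{2n+1}(\zeta)$ — is already in hand, and the substance of the theorem is carried entirely by \eqref{eqn:HomCohomQ}. The only points requiring a little care are the exact overlap $\langle\Phi_n(\zeta^{-1})|\Phi_n(\zeta)\rangle=4^n$, which uses only that $\{\mathcal M(\lambda)\}$ is a multiplicative family of self-adjoint operators together with the elementary norm of $|\Phi_n\rangle$, and — if one takes the spin-parity route — the bookkeeping that $\Q$ commutes with $\mathcal P$ so that the parity decomposition is compatible with \eqref{eqn:CohomRepE0States}.
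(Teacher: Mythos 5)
Your proposal is correct and follows the paper's own proof essentially verbatim: absence/existence of alternate-cyclic zero-energy states from \eqref{eqn:HomCohomQ}, the decomposition from \eqref{eqn:CohomRep}, non-vanishing of $\lambda_n,\bar\lambda_n$ from \cref{lem:Impossible}, and the normalisation from the overlap $\langle\Phi_n(\zeta^{-1})|\Phi_n(\zeta)\rangle=4^n$ together with \eqref{eqn:QOnPhi}. The only additions are details the paper leaves implicit (the explicit evaluation of that overlap via the multiplicativity and self-adjointness of $\mathcal M(\lambda)$, and the optional spin-parity bookkeeping), and these check out.
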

Next, we show that for generic $\zeta$ the zero-energy states $|\Psi_n\rangle$ and $|\bar \Psi_n\rangle$ have the same spin parity and transformation behaviour under spin reversal \eqref{eqn:PROnPhi} as for $\zeta=1$.
\begin{proposition}
\label{prop:Symmetries}
For each $n\geqslant 1$, the alternate-cyclic zero-energy states defined in \eqref{eqn:CohomRepE0States} satisfy
\begin{equation}
  \mathcal P|\Psi_n\rangle = +|\Psi_n\rangle, \quad  \mathcal P|\bar \Psi_n\rangle = -|\bar \Psi_n\rangle.
\end{equation}
    Furthermore, they can be normalised in such a way that
  $\mathcal R|\Psi_n\rangle = |\bar \Psi_n\rangle$.
  \begin{proof}
     First, we consider the action of the spin-parity operator on the zero-energy states. To this end, we notice that this operator anticommutes with the supercharge
     \begin{equation}
       \Q \mathcal P + \mathcal P \Q = 0.
       \label{eqn:ACPQ}
     \end{equation}
This follows from the definition of the local supercharge \eqref{eqn:DefLocalQ}. We use this relation to show that $\mathcal P|\Psi_n\rangle = +|\Psi_n\rangle$. A short calculation leads to
      \begin{equation}
      \mathcal P|\Psi_n\rangle -|\Psi_n\rangle =  -\Q(\mathcal P+1)|\alpha_n\rangle,
    \end{equation}
    where we used that $\mathcal P|\Phi_n(\zeta)\rangle = + |\Phi_n(\zeta)\rangle$. Since the Hamiltonian $H$ commutes with the spin-parity operator $\mathcal P$, the left-hand side of this equality, if non-zero, is a zero-energy state. The right-hand side is in the image $\Q$. \cref{lem:Impossible} states that this is not possible. Hence, both sides have to vanish. This leads to the desired result. The proof of $\mathcal P|\bar \Psi_n\rangle=-|\bar \Psi_n\rangle$ is similar.
    
    Second, the states $|\Psi_n\rangle,\,|\bar \Psi_n\rangle$ have thus opposite spin parity and span a two-dimensional eigenspace of the Hamiltonian. The Hamiltonian commutes with the spin-reversal operator. We conclude that $\mathcal R |\Psi_n\rangle = \rho_n|\bar \Psi_n\rangle$ and $\mathcal R |\bar \Psi_n\rangle = \rho_n^{-1}|\bar \Psi_n\rangle$ for a non-vanishing complex number $\rho_n$. It can be set to one by adjusting the normalisation of the states.
    \end{proof}
\end{proposition}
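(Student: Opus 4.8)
The plan is to deduce both statements directly from the cohomological representation \eqref{eqn:CohomRepE0States} of the alternate-cyclic zero-energy states, using only ingredients already at hand: the relations $[H,\mathcal P]=0$ and $[H,\mathcal R]=0$; the fact that $\mathcal P$ and $\mathcal R$ commute with the translation operator and hence preserve $W^L$; an anticommutation relation $\Q\mathcal P+\mathcal P\Q=0$, which follows from the definition \eqref{eqn:DefLocalQ} of the local supercharge; and \cref{lem:Impossible}. As a preliminary remark I would note that $\mathcal M(\lambda)$ is diagonal in the canonical basis and hence commutes with $\mathcal P$, so that \eqref{eqn:PROnPhi} together with the definitions \eqref{eqn:DefPhi} yields $\mathcal P|\Phi_n(\zeta)\rangle=+|\Phi_n(\zeta)\rangle$ and $\mathcal P|\bar\Phi_n(\zeta)\rangle=-|\bar\Phi_n(\zeta)\rangle$.

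For the spin parity, I would apply $\mathcal P$ to $|\Psi_n\rangle=\lambda_n|\Phi_n(\zeta)\rangle+\Q|\alpha_n\rangle$ and push $\mathcal P$ through the supercharge; using the preliminary remark this gives
\begin{equation}
  \mathcal P|\Psi_n\rangle-|\Psi_n\rangle=-\Q(1+\mathcal P)|\alpha_n\rangle,
\end{equation}
and likewise $\mathcal P|\bar\Psi_n\rangle+|\bar\Psi_n\rangle=\Q(1-\mathcal P)|\bar\alpha_n\rangle$. The right-hand sides lie in the image of $\Q$. The left-hand sides lie in $W^L$ and, since $H$ commutes with $\mathcal P$, are annihilated by $H$; thus, if non-zero, they are alternate-cyclic zero-energy states. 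But \cref{lem:Impossible} forbids a zero-energy state from lying in the image of $\Q$, so both sides must vanish, which gives $\mathcal P|\Psi_n\rangle=+|\Psi_n\rangle$ and $\mathcal P|\bar\Psi_n\rangle=-|\bar\Psi_n\rangle$.

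For the spin reversal, the invertible operator $\mathcal R$ preserves $W^L$ and commutes with $H$, so it maps the two-dimensional space of alternate-cyclic zero-energy states to itself; by the previous paragraph this space splits into the eigenlines $\mathbb C|\Psi_n\rangle$ and $\mathbb C|\bar\Psi_n\rangle$ of $\mathcal P$ with eigenvalues $+1$ and $-1$. For odd $L=2n+1$ the operators $\mathcal R$ and $\mathcal P$ anticommute, so $\mathcal R$ interchanges these two eigenlines: $\mathcal R|\Psi_n\rangle=\rho_n|\bar\Psi_n\rangle$ with $\rho_n\neq 0$, and then $\mathcal R^2=1$ forces $\mathcal R|\bar\Psi_n\rangle=\rho_n^{-1}|\Psi_n\rangle$. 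Replacing $|\bar\Psi_n\rangle$ by $\rho_n|\bar\Psi_n\rangle$ (and rescaling $\bar\lambda_n$ and $|\bar\alpha_n\rangle$ accordingly, which changes neither its spin parity nor the shape of the decomposition \eqref{eqn:CohomRepE0States}) normalises $\rho_n$ to $1$ and yields $\mathcal R|\Psi_n\rangle=|\bar\Psi_n\rangle$.

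The only non-mechanical ingredient is the anticommutation $\Q\mathcal P+\mathcal P\Q=0$; once it is established from \eqref{eqn:DefLocalQ}, the rest is a bookkeeping combination of \cref{lem:Impossible} with the symmetries of $H$. I expect the sign bookkeeping in that verification — together with checking carefully that the left-hand sides above genuinely meet the hypotheses of \cref{lem:Impossible}, namely membership in $W^L$ and annihilation by $H$ — to be the only delicate points.
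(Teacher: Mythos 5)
Your proof is correct and follows essentially the same route as the paper: the anticommutation $\Q\mathcal P+\mathcal P\Q=0$ combined with \cref{lem:Impossible} for the parity statement, and the $\mathcal P$--$\mathcal R$ anticommutation on the two-dimensional zero-energy space for the spin-reversal statement. The extra details you supply (that $\mathcal M(\lambda)$ commutes with $\mathcal P$, and that $\mathcal R^2=1$ fixes the reciprocal coefficient) are exactly the steps the paper leaves implicit.
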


\begin{theorem}
\label{prop:HomRep}
For each $n\geqslant 1$, the alternate-cyclic zero-energy states $|\Psi_n\rangle$ and $|\bar \Psi_n\rangle$ can be written as
  \begin{equation}
    |\Psi_n\rangle = \mu_{n}|\Phi_n(\zeta^{-1})\rangle + \Q^\dagger|\beta_n\rangle,\quad |\bar\Psi_n\rangle = \bar \mu_{n}|\bar \Phi_n(\zeta^{-1})\rangle + \Q^\dagger|\bar \beta_n\rangle,
    \label{eqn:HomRepE0States}
  \end{equation}
  where $|\beta_n\rangle,|\bar \beta_n\rangle \in W^{2(n+1)}$. The constants $\mu_n$ and $\bar \mu_n$ are non-zero and given by
  \begin{equation}
    \mu_n = \frac{1}{4^n}\langle \Phi_n(\zeta)|\Psi_n\rangle, \quad \bar \mu_n = \frac{1}{4^n}\langle \bar \Phi_n(\zeta)|\bar \Psi_n\rangle.
  \end{equation}
  \begin{proof}
    We focus on the state $|\Psi_n\rangle$. It follows from the decomposition \eqref{eqn:HomRep} and from \eqref{eqn:HomCohomQ} that there are constants $\mu_n,\, \nu_n$ and a state $|\beta_n\rangle \in W^{2(n+1)}$ such that
    \begin{equation}
      |\Psi_n\rangle = \mu_n|\Phi_n(\zeta^{-1})\rangle + \nu_n|\bar \Phi_n(\zeta^{-1})\rangle + \Q^\dagger|\beta_n\rangle.
      \label{eqn:PsiDecomp}
    \end{equation}
 We act on both sides of this equality with the spin-parity operator and find
  \begin{equation}
      |\Psi_n\rangle = \mu_n|\Phi_n(\zeta^{-1})\rangle - \nu_n|\bar \Phi_n(\zeta^{-1})\rangle - \Q^\dagger \mathcal P|\beta_n\rangle.    \end{equation}
 The difference of these two equalities leads to 
   \begin{equation}
   2 \nu_n|\bar \Phi_n(\zeta^{-1})\rangle = -\Q^\dagger(1+\mathcal P)|\beta_{n}\rangle.
     \end{equation}
    We take the scalar product of both sides of this equality with $|\bar \Phi_n(\zeta)\rangle$. The scalar product with the right-hand side vanishes because of \eqref{eqn:QOnPhi}. On the left-hand side, we find $2^{2n+1} \nu_n$ and therefore have $\nu_n=0$. Finally, we determine the value of $\mu_n$ by taking the scalar product of both sides of \eqref{eqn:PsiDecomp} with $|\Phi_n(\zeta)\rangle$.
    
    The reasoning for $|\bar \Psi_n\rangle$ is similar.
    \end{proof}
\end{theorem}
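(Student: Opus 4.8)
The plan is to run the argument of \cref{prop:CohomRep} ``in the opposite direction'', replacing the cohomology $\mathcal H^{2n+1}(\zeta)$ by the homology $\mathcal H_{2n+1}(\zeta)$ and the representative \eqref{eqn:CohomRep} by \eqref{eqn:HomRep}. I would first fix $\zeta>0$. Since $|\Psi_n\rangle$ is an alternate-cyclic zero-energy state of $H$ on $W^{2n+1}$, the decomposition \eqref{eqn:HomRep} together with the explicit description $\mathcal H_{2n+1}(\zeta)=\mathbb C[|\Phi_n(\zeta^{-1})\rangle]\oplus\mathbb C[|\bar\Phi_n(\zeta^{-1})\rangle]$ from \eqref{eqn:HomCohomQ} yields constants $\mu_n,\nu_n$ and a state $|\beta_n\rangle\in W^{2(n+1)}$ with
\[
  |\Psi_n\rangle = \mu_n|\Phi_n(\zeta^{-1})\rangle + \nu_n|\bar\Phi_n(\zeta^{-1})\rangle + \Q^\dagger|\beta_n\rangle .
\]
It then remains to show that $\nu_n=0$, to identify $\mu_n$, and to check that $\mu_n\neq0$.

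For the first point I would use the spin-parity symmetry. Taking the adjoint of \eqref{eqn:ACPQ} gives $\mathcal P\Q^\dagger=-\Q^\dagger\mathcal P$ (recall $\mathcal P^\dagger=\mathcal P$), while $\mathcal P$ commutes with the diagonal operator $\mathcal M(\lambda)$, so \eqref{eqn:PROnPhi} yields $\mathcal P|\Phi_n(\zeta^{-1})\rangle=+|\Phi_n(\zeta^{-1})\rangle$ and $\mathcal P|\bar\Phi_n(\zeta^{-1})\rangle=-|\bar\Phi_n(\zeta^{-1})\rangle$. Applying $\mathcal P$ to the displayed identity, using $\mathcal P|\Psi_n\rangle=+|\Psi_n\rangle$ from \cref{prop:Symmetries}, and subtracting from the original equation produces $2\nu_n|\bar\Phi_n(\zeta^{-1})\rangle=-\Q^\dagger(1+\mathcal P)|\beta_n\rangle$. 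Pairing with $\langle\bar\Phi_n(\zeta)|$ kills the right-hand side, because the defining relation of $\Q^\dagger$ and $\Q(\zeta)|\bar\Phi_n(\zeta)\rangle=0$ from \eqref{eqn:QOnPhi} give $\langle\bar\Phi_n(\zeta)|\Q^\dagger|\gamma\rangle=\langle\gamma|\big(\Q(\zeta)|\bar\Phi_n(\zeta)\rangle\big)^\ast=0$ for any $|\gamma\rangle$. On the left one needs the cross-normalisation $\langle\bar\Phi_n(\zeta)|\bar\Phi_n(\zeta^{-1})\rangle$; a short computation from \eqref{eqn:DefPhi}, using the self-adjointness of $\mathcal M(\zeta^{1/2})$ for $\zeta>0$ and $\mathcal M(\zeta^{1/2})\mathcal M(\zeta^{-1/2})=\mathbf 1$, shows it equals $\langle\bar\Phi_n|\bar\Phi_n\rangle=4^n$. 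Hence $\nu_n=0$. The same scalar-product computation with $\langle\Phi_n(\zeta)|$ applied to the resulting identity $|\Psi_n\rangle=\mu_n|\Phi_n(\zeta^{-1})\rangle+\Q^\dagger|\beta_n\rangle$, again using \eqref{eqn:QOnPhi}, gives $\langle\Phi_n(\zeta)|\Psi_n\rangle=\mu_n\langle\Phi_n(\zeta)|\Phi_n(\zeta^{-1})\rangle=4^n\mu_n$, i.e.\ $\mu_n=4^{-n}\langle\Phi_n(\zeta)|\Psi_n\rangle$.

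The constant $\mu_n$ cannot vanish: otherwise $|\Psi_n\rangle=\Q^\dagger|\beta_n\rangle$ would be a zero-energy state in the image of $\Q^\dagger$, contradicting \cref{lem:Impossible}. The argument for $|\bar\Psi_n\rangle$ is identical, except that now $\mathcal P|\bar\Psi_n\rangle=-|\bar\Psi_n\rangle$ forces the \emph{unbarred} component to vanish, leaving $|\bar\Psi_n\rangle=\bar\mu_n|\bar\Phi_n(\zeta^{-1})\rangle+\Q^\dagger|\bar\beta_n\rangle$ with $\bar\mu_n=4^{-n}\langle\bar\Phi_n(\zeta)|\bar\Psi_n\rangle$. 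The passage from $\zeta>0$ to arbitrary real non-zero $\zeta$ goes through exactly as in the discussion following \eqref{eqn:HomCohomQ}, using that $|\Phi_n(\cdot)\rangle$ and $|\bar\Phi_n(\cdot)\rangle$ are polynomials. I expect the only delicate step to be the elimination of the cross term $\nu_n$: it is precisely there that both the spin-parity behaviour of the zero-energy states (\cref{prop:Symmetries}) and the exact value $\langle\bar\Phi_n(\zeta)|\bar\Phi_n(\zeta^{-1})\rangle=4^n$ — valid because $\zeta$ is real, so that $|\Phi_n(\zeta)\rangle$ and $|\bar\Phi_n(\zeta)\rangle$ have real components — are indispensable.
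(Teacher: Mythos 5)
Your proposal is correct and follows essentially the same route as the paper's proof: decompose $|\Psi_n\rangle$ via the homology representatives from \eqref{eqn:HomCohomQ}, eliminate the cross term $\nu_n$ by applying the spin-parity operator and pairing with $\langle\bar\Phi_n(\zeta)|$ (using \eqref{eqn:QOnPhi} and the cross-normalisation $\langle\bar\Phi_n(\zeta)|\bar\Phi_n(\zeta^{-1})\rangle=4^n$), then identify $\mu_n$ by pairing with $\langle\Phi_n(\zeta)|$. Your explicit appeal to \cref{lem:Impossible} for $\mu_n\neq0$ and the polynomiality argument for general real non-zero $\zeta$ are correct elaborations of steps the paper leaves implicit.
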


The decomposition of a zero-energy state as the sum of a representative and a state in the image of the supercharge is not unique. We now determine an alternative decomposition for $|\bar \Psi_n\rangle$, which will be useful in \cref{sec:8V}.

\begin{proposition}
\label{prop:AlternativeCohomRep}
For each $n\geqslant 1$, the state $|\bar\Psi_n\rangle$ can be written
\begin{equation}
  |\bar \Psi_n\rangle = \nu_{n}|{\uparrow \cdots \uparrow}\rangle + \Q|\gamma_{n}\rangle
  \label{eqn:AlternativeCohomRep}
\end{equation}
for some state $|\gamma_{n}\rangle \in W^{2n}$. The constant $\nu_n$ is non-zero and given by $\nu_{n} = \langle \bar \Phi_{n}(\zeta^{-1})|\bar \Psi_{n}\rangle$.
\begin{proof}
 We show that for each $n\geqslant 1$ there is a linear combination of $|{\uparrow \cdots \uparrow}\rangle\in W^{2n+1}$ and $|\Phi_n(\zeta)\rangle$ that is in the image of the supercharge.
   
To see this, we notice that for each $n\geqslant 1$ the state $|{\uparrow \cdots \uparrow}\rangle\in W^{2n+1}$ is annihilated by $\Q$. This can be seen from the definition of the local supercharge \eqref{eqn:DefLocalQ}. It follows from \eqref{eqn:HomCohomQ} that there are constants $\eta_n,\bar \eta_n$ and a state $|\delta_n\rangle \in W^{2n}$ such that
 \begin{equation}
   |{\uparrow \cdots \uparrow}\rangle = \eta_n |\Phi_n(\zeta)\rangle + \bar \eta_n|\bar \Phi_n(\zeta)\rangle + \Q|\delta_n\rangle.
   \label{eqn:Eq1}
 \end{equation} 
 We act on both sides of this equality with the spin-parity operator $\mathcal P$, which leads to
 \begin{equation}
 -|{\uparrow \cdots \uparrow}\rangle = \eta_n|\Phi_n(\zeta)\rangle - \bar \eta_n|\bar \Phi_n(\zeta)\rangle - \Q\mathcal P|\delta_n\rangle.
 \label{eqn:Eq2}
 \end{equation}
 We take the sum of \eqref{eqn:Eq1} and \eqref{eqn:Eq2} and find
 \begin{equation}
 2 \eta_n|\Phi_n(\zeta)\rangle+\Q(1-\mathcal P)|\delta_n\rangle=0.
 \end{equation}
 The projection of this equality onto the state $|\Phi_n(\zeta^{-1})\rangle$ leads to $\eta_n = 0$ (and therefore $\Q(1-\mathcal P)|\delta_n\rangle=0$).
 Hence \eqref{eqn:Eq1} becomes 
  \begin{equation}
    |{\uparrow \cdots \uparrow}\rangle = \bar \eta_n|\bar \Phi_n(\zeta)\rangle+\Q|\delta_n\rangle.
  \end{equation}
  We take the scalar product of both sides of this equality with $|\bar \Phi_n(\zeta^{-1})\rangle$ and find $\bar \eta_n=\frac{1}{4^n}$.
  
  Finally, we combine this result with \eqref{eqn:CohomRepE0States} and find
  \begin{equation}
    |\bar \Psi_n\rangle = 4^n \bar\lambda_n|{\uparrow\cdots\uparrow}\rangle +\Q(|\bar \alpha_n\rangle-4^n\bar \lambda_n |\delta_n\rangle).
  \end{equation}
  This leads to \eqref{eqn:AlternativeCohomRep} with $\nu_n = 4^n\bar \lambda_n = \langle \bar \Phi_{n}(\zeta^{-1})|\bar \Psi_{n}(\zeta)\rangle$ and $|\gamma_n\rangle=|\bar \alpha_n\rangle-4^n\bar \lambda_n |\delta_n\rangle$. 
  \end{proof}
\end{proposition}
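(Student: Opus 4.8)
The plan is to produce a cohomology representative of $|\bar\Psi_n\rangle$ built from the fully polarised state $|{\uparrow \cdots \uparrow}\rangle$, and then to substitute it into the decomposition \eqref{eqn:CohomRepE0States}. First I would observe that $|{\uparrow \cdots \uparrow}\rangle \in W^{2n+1}$: it is translation invariant, $\mathcal S|{\uparrow \cdots \uparrow}\rangle = |{\uparrow \cdots \uparrow}\rangle$, and $(-1)^{(2n+1)+1}=+1$; and it is annihilated by $\Q$ because $\q|{\uparrow}\rangle = 0$ in \eqref{eqn:DefLocalQ}, so every $\q_j$ kills it. Hence $[|{\uparrow \cdots \uparrow}\rangle]\in\mathcal H^{2n+1}(\zeta)$, and by \eqref{eqn:HomCohomQ} there are constants $\eta_n,\bar\eta_n$ and a state $|\delta_n\rangle\in W^{2n}$ with $|{\uparrow \cdots \uparrow}\rangle = \eta_n|\Phi_n(\zeta)\rangle + \bar\eta_n|\bar\Phi_n(\zeta)\rangle + \Q|\delta_n\rangle$.

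Next I would determine the two coefficients using the spin-parity operator $\mathcal P$. For $L=2n+1$ one has $\mathcal P|{\uparrow \cdots \uparrow}\rangle = -|{\uparrow \cdots \uparrow}\rangle$, while $\mathcal P|\Phi_n(\zeta)\rangle = +|\Phi_n(\zeta)\rangle$ and $\mathcal P|\bar\Phi_n(\zeta)\rangle = -|\bar\Phi_n(\zeta)\rangle$ (these follow from \eqref{eqn:PROnPhi} together with the fact that $\mathcal M(\lambda)$, being diagonal in the canonical basis, commutes with $\mathcal P$), and $\mathcal P\Q=-\Q\mathcal P$ by \eqref{eqn:ACPQ}. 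Adding the $\mathcal P$-image of the expansion to the expansion itself removes the $|\bar\Phi_n(\zeta)\rangle$ term and leaves $2\eta_n|\Phi_n(\zeta)\rangle + \Q(1-\mathcal P)|\delta_n\rangle = 0$; projecting onto $\langle\Phi_n(\zeta^{-1})|$, which kills the image of $\Q$ since $\Q^\dagger|\Phi_n(\zeta^{-1})\rangle=0$ by \eqref{eqn:QOnPhi}, and using $\langle\Phi_n(\zeta^{-1})|\Phi_n(\zeta)\rangle = 4^n \neq 0$, forces $\eta_n=0$. I would then project the surviving identity $|{\uparrow \cdots \uparrow}\rangle = \bar\eta_n|\bar\Phi_n(\zeta)\rangle + \Q|\delta_n\rangle$ onto $\langle\bar\Phi_n(\zeta^{-1})|$: the $\Q$-term drops out again, and a short direct computation — using $\mathcal M(\lambda)|{\uparrow \cdots \uparrow}\rangle = \lambda^{2n+1}|{\uparrow \cdots \uparrow}\rangle$ and the explicit definitions \eqref{eqn:DefPhi}, \eqref{eqn:DefPhiTrivial} — yields $\langle\bar\Phi_n(\zeta^{-1})|{\uparrow \cdots \uparrow}\rangle = 1$, hence $\bar\eta_n = 4^{-n}$, and in particular $\bar\eta_n\neq 0$.

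Finally, solving $|{\uparrow \cdots \uparrow}\rangle = 4^{-n}|\bar\Phi_n(\zeta)\rangle + \Q|\delta_n\rangle$ for $|\bar\Phi_n(\zeta)\rangle$ and inserting the result into $|\bar\Psi_n\rangle = \bar\lambda_n|\bar\Phi_n(\zeta)\rangle + \Q|\bar\alpha_n\rangle$ from \eqref{eqn:CohomRepE0States} produces \eqref{eqn:AlternativeCohomRep} with $\nu_n = 4^n\bar\lambda_n$ and $|\gamma_n\rangle = |\bar\alpha_n\rangle - 4^n\bar\lambda_n|\delta_n\rangle \in W^{2n}$. Combining $\nu_n = 4^n\bar\lambda_n$ with the value of $\bar\lambda_n$ given in \cref{prop:CohomRep} gives $\nu_n = \langle\bar\Phi_n(\zeta^{-1})|\bar\Psi_n\rangle$, which is non-zero because $\bar\lambda_n\neq 0$ (again by \cref{lem:Impossible}).

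I expect the crux to be the part of the second step that establishes $\bar\eta_n\neq 0$, i.e. that $|{\uparrow \cdots \uparrow}\rangle$ is not in the image of $\Q$. In contrast to the vanishing of $\eta_n$, which is dictated purely by the parity symmetry and the structure of the cohomology, the non-exactness of $|{\uparrow \cdots \uparrow}\rangle$ cannot be read off abstractly: it forces one back to the explicit formulas \eqref{eqn:DefPhi} and \eqref{eqn:DefPhiTrivial} in order to evaluate the overlap $\langle\bar\Phi_n(\zeta^{-1})|{\uparrow \cdots \uparrow}\rangle$.
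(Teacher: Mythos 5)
Your proposal is correct and follows essentially the same route as the paper: expand $|{\uparrow\cdots\uparrow}\rangle$ in the cohomology basis of \eqref{eqn:HomCohomQ}, kill the $|\Phi_n(\zeta)\rangle$ coefficient by spin parity, fix $\bar\eta_n=4^{-n}$ by projecting onto $|\bar\Phi_n(\zeta^{-1})\rangle$, and substitute back into \eqref{eqn:CohomRepE0States}. The only difference is that you spell out a few details the paper leaves implicit (the overlap $\langle\bar\Phi_n(\zeta^{-1})|{\uparrow\cdots\uparrow}\rangle=1$ and the commutation of $\mathcal M(\lambda)$ with $\mathcal P$), which are correct.
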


As was pointed out above, the states $|\Psi_n\rangle$ and $|\bar \Psi_n\rangle$ span the space of the ground states of $H_{\text{\tiny XYZ}}$ with $L=2n+1$ sites and the anisotropy parameters \eqref{eqn:CL}, \textit{restricted to $W^{2n+1}$}. The next theorem shows that they are in fact the ground states on the full Hilbert space $V^{2n+1}$. This follows from a generalisation of a classical result \cite{yang:66} by Yang and Yang on the ground state of the XXZ chain and an argument by Yang and Fendley \cite{yang:04}.
\begin{theorem}
  \label{thm:GS}
  For each $L=2n+1,\, n\geqslant 1$, and non-zero $\zeta$ the states $|\Psi_n\rangle$ and $|\bar \Psi_n\rangle$ span the space of the ground states of $H_{\textnormal{\tiny XYZ}}$ with the anisotropy parameters defined in \eqref{eqn:CL}. The corresponding ground-state eigenvalue is $E_0 = - (2n+1)(3+\zeta^2)/4$.
  \begin{proof}
    We divide the proof into four steps.
        
    First, we notice that it is sufficient to prove the statement for $\zeta > 0$. The reason is that the Hamiltonians for $\zeta>0$ and $\zeta < 0$ can be related by a unitary transformation. Indeed, writing $H_{\text{\tiny XYZ}} = H_{\text{\tiny XYZ}}(\zeta)$, we have
    \begin{equation}
      H_{\text{\tiny XYZ}}(-\zeta) = \mathcal M(\i) H_{\text{\tiny XYZ}}(\zeta) \mathcal M(\i)^\dagger
    \end{equation}
    where $\mathcal M(\i)$ is the operator introduced in \cref{sec:Cohomology}. Furthermore, let us write $|\Psi_n(\zeta)\rangle$ and $|\bar \Psi_n(\zeta)\rangle$ for the alternate-cyclic zero-energy states. Using \cref{prop:CohomRep,prop:Symmetries} one can show that $\mathcal M(\i)|\Psi_n(\zeta)\rangle = \gamma_n|\Psi_n(-\zeta)\rangle$ and $\mathcal M(\i)|\bar \Psi_n(\zeta)\rangle = \bar \gamma_n|\bar \Psi_n(-\zeta)\rangle$ where $\gamma_n,\,\bar\gamma_n$ are non-zero complex numbers. Hence, if $\ket{\Psi_n(\zeta)}$ and $\ket{\bar\Psi_n(\zeta)}$ span the space of the ground states of $H_{\text{\tiny XYZ}}(\zeta)$, then $\ket{\Psi_n(-\zeta)}$ and $\ket{\bar\Psi_n(-\zeta)}$ will span the space of ground states of $H_{\text{\tiny XYZ}}(-\zeta)$ as well.
      
  Second, for $\zeta > 0$ the off-diagonal entries of $ H_{\text{\tiny XYZ}}$ are zero or negative. Hence, there is a constant $\lambda $ such that $\lambda - H_{\text{\tiny XYZ}}$ is a non-negative matrix with positive   diagonal entries. We consider the restriction $H_\pm$ of $\lambda - H_{\text{\tiny XYZ}}$ to the eigenspace of the spin-parity operator $\mathcal P$ associated to the eigenvalue $\pm 1$. The matrix $H_\pm$ is Hermitian and thus has real eigenvalues. Furthermore, the repeated action of $H_\pm$ on any basis state $|s_1s_2\cdots s_L\rangle$ with spin parity $\pm 1$ leads to linear combinations of basis states that have the same spin parity. The coefficients of these linear combinations are positive. Any other basis state $|s_1's_2'\cdots s_L'\rangle$ with this spin parity can be found in one of these linear combinations. Following \cite{yang:66}, we conclude that there exists an integer $m \geqslant 1$ such that $H_\pm^m$ is a positive matrix.  Hence, $H_\pm$ is \textit{irreducible and non-negative} \cite{meyer:00}. We may thus apply the \textit{Perron-Frobenius theorem for irreducible non-negative matrices}. It implies that the largest eigenvalue $\lambda_\pm$ of $H_\pm$ is non-degenerate. Furthermore, there is a unique state $|\Psi_\pm\rangle$ with positive components and norm one such that
\begin{equation}
  H_\pm |\Psi_\pm\rangle = \lambda_{\pm}|\Psi_\pm\rangle.
  \label{eqn:DefPsiPm}
\end{equation}
Considered as a vector of $V^{2n+1}$, $|\Psi_\pm\rangle$ has non-negative components. It spans the one-dimensional space of the ground states of $H_{\text{\tiny XYZ}}$ in the subsector where the spin parity is fixed to $\pm1$.

Third, following \cite{yang:04} we prove that $|\Psi_\pm\rangle$ is invariant under translations. Indeed, because of $[H_{\text{\tiny XYZ}},\mathcal S]=0$ and $[\mathcal P,\mathcal S]=0$, we have
\begin{equation}
  \mathcal S|\Psi_\pm\rangle = t_\pm|\Psi_\pm\rangle, \quad \text{with}\quad  t^{2n+1}_\pm=1.
\end{equation}
We now take the complex conjugate of this equation. Since the components of $|\Psi_\pm\rangle$ are real, we immediately find $\bar t_\pm = t^{-1}_\pm = t_\pm$ and therefore $t_\pm=1$. The state $|\Psi_\pm\rangle$ is therefore alternate-cyclic.

Finally, it follows from \cref{prop:CohomRep,prop:Symmetries} that $|\Psi_+\rangle$ and $|\Psi_-\rangle$ are proportional to $|\Psi_n\rangle$ and $|\bar \Psi_n\rangle$, respectively. The ground-state eigenvalue follows from \eqref{eqn:XYZHamSusy} and therefore is doubly degenerate.
\end{proof}
\end{theorem}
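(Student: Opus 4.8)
The plan is to leverage what \cref{prop:CohomRep} and the relation \eqref{eqn:XYZHamSusy} already give, namely that $|\Psi_n\rangle$ and $|\bar\Psi_n\rangle$ span the ground-state space of $H_{\text{\tiny XYZ}}$ \emph{restricted to the alternate-cyclic subspace} $W^{2n+1}$, with eigenvalue $E_0 = -(2n+1)(3+\zeta^2)/4$ from \eqref{eqn:E0Value}. What must still be shown is that no eigenvalue of $H_{\text{\tiny XYZ}}$ strictly below $E_0$ appears on the orthogonal complement of $W^{2n+1}$ in $V^{2n+1}$, and that the ground-state degeneracy on the full Hilbert space is exactly two. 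I expect this to be the crux: \emph{a priori} the true ground state need not be translation-invariant, so one cannot simply quote the result on $W^{2n+1}$. The tool for this is a Perron--Frobenius argument in the spirit of Yang and Yang \cite{yang:66} and Yang and Fendley \cite{yang:04}.

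First I would reduce to $\zeta>0$. The operator $\mathcal M(\i)$ of \cref{sec:Cohomology} is diagonal in the canonical basis with entries of unit modulus, hence unitary, and a direct computation from \eqref{eqn:XYZHam} with the parameters \eqref{eqn:CL} gives $H_{\text{\tiny XYZ}}(-\zeta) = \mathcal M(\i)\,H_{\text{\tiny XYZ}}(\zeta)\,\mathcal M(\i)^\dagger$. Since $\mathcal M(\i)$ preserves $W^{2n+1}$ and commutes with $\mathcal P$, \cref{prop:CohomRep,prop:Symmetries} show that it maps $|\Psi_n(\zeta)\rangle$ and $|\bar\Psi_n(\zeta)\rangle$ to non-zero multiples of $|\Psi_n(-\zeta)\rangle$ and $|\bar\Psi_n(-\zeta)\rangle$; hence the statement for positive $\zeta$ entails the one for negative $\zeta$.

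For $\zeta>0$, all off-diagonal entries of $H_{\text{\tiny XYZ}}$ in the canonical basis \eqref{eqn:BasisVectors} are non-positive (the spin-flip terms contribute $-\zeta$ and $-1$ per bond). Since $[H_{\text{\tiny XYZ}},\mathcal P]=0$, I would work in each spin-parity sector separately. For $\lambda$ large enough, $\lambda - H_{\text{\tiny XYZ}}$ restricted to the sector $\mathcal P=\pm1$ is a non-negative matrix with strictly positive diagonal. Following the connectivity argument of \cite{yang:66}, repeated application of this matrix to any canonical basis state of the given parity generates, with positive coefficients, every basis state of that parity, so some power of $\lambda-H_{\text{\tiny XYZ}}$ is strictly positive on the sector; the matrix is thus irreducible and non-negative. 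The Perron--Frobenius theorem then yields a simple largest eigenvalue with a unique strictly positive eigenvector $|\Psi_\pm\rangle$, which spans the ground-state space of $H_{\text{\tiny XYZ}}$ in that sector.

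It remains to identify $|\Psi_\pm\rangle$. Because $[H_{\text{\tiny XYZ}},\mathcal S]=[\mathcal P,\mathcal S]=0$ and each ground space is one-dimensional, $\mathcal S|\Psi_\pm\rangle=t_\pm|\Psi_\pm\rangle$ with $t_\pm^{2n+1}=1$; as the components of $|\Psi_\pm\rangle$ are real, complex conjugation forces $\overline{t_\pm}=t_\pm^{-1}=t_\pm$, so $t_\pm=1$, which, $L=2n+1$ being odd, is the alternate-cyclic eigenvalue $(-1)^{L+1}$. Thus $|\Psi_\pm\rangle\in W^{2n+1}$, where $H_{\text{\tiny XYZ}}=H+E_0\geqslant E_0$ by \eqref{eqn:XYZHamSusy} and the non-negativity of $H$. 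Since $|\Psi_n\rangle$ and $|\bar\Psi_n\rangle$ attain the value $E_0$ in the sectors $\mathcal P=+1$ and $\mathcal P=-1$ respectively, the Perron eigenvalue in each sector is exactly $E_0$, and $|\Psi_\pm\rangle$ is an alternate-cyclic zero-energy state of $H$ of parity $\pm1$. By \cref{prop:CohomRep} and the spin parities established in \cref{prop:Symmetries}, such a state is unique up to a scalar, so $|\Psi_+\rangle\propto|\Psi_n\rangle$ and $|\Psi_-\rangle\propto|\bar\Psi_n\rangle$. As $V^{2n+1}$ is the direct sum of the two parity sectors, each contributing one ground state at the common energy $E_0$, the ground-state space of $H_{\text{\tiny XYZ}}$ on $V^{2n+1}$ is spanned by $|\Psi_n\rangle$ and $|\bar\Psi_n\rangle$ and is two-dimensional, with ground-state eigenvalue $E_0=-(2n+1)(3+\zeta^2)/4$.
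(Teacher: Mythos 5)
Your proof is correct and follows essentially the same route as the paper: reduction to $\zeta>0$ via conjugation with $\mathcal M(\i)$, Perron--Frobenius applied to $\lambda-H_{\text{\tiny XYZ}}$ in each fixed spin-parity sector, the reality/translation argument forcing $t_\pm=1$ so that the sector ground states are alternate-cyclic, and the identification with $|\Psi_n\rangle$, $|\bar\Psi_n\rangle$ via \cref{prop:CohomRep,prop:Symmetries}. Your explicit remark that $H_{\text{\tiny XYZ}}=H+E_0\geqslant E_0$ on $W^{2n+1}$ while $|\Psi_n\rangle,|\bar\Psi_n\rangle$ attain $E_0$ merely spells out the final identification that the paper states more tersely.
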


\section{The transfer-matrix eigenvalue}
\label{sec:8V}
In this section, we prove \cref{thm:MainResult}. In \cref{sec:TM}, we briefly review the definition and a few properties of the transfer matrix of the eight-vertex model. This is followed by a sequence of intermediate results for the supersymmetric eight-vertex model which we need for the proof of our main result. In \cref{prop:Restriction}, we show that the eigenstates corresponding to $\Theta_n= (a+b)^{2n+1}$ are necessarily alternate-cyclic zero-energy states of the Hamiltonian $H$. In \cref{sec:TMSUSY}, we prove \cref{prop:TQRelation} which establishes a commutation relation between the supercharge and the transfer matrix of the supersymmetric eight-vertex model. Furthermore, we show in \cref{prop:MatrixElementOp} that the commutation relation can be used to greatly simplify the computation of the action of the transfer matrix on the space of alternate-cyclic zero-energy states, \textit{even without the explicit knowledge of the states' components}. The actual computation is done in \cref{sec:EV}. In particular, in \cref{prop:Theta} we reduce the evaluation of the transfer-matrix eigenvalues on this subspace to a combinatorial problem. Eventually, the proof of our main theorem amounts to a combination of these propositions.

\subsection{The transfer matrix of the eight-vertex model}
\label{sec:TM}

The $R$-matrix of the eight-vertex model is an operator $R:\mathbb C^2 \otimes \mathbb C^2\to \mathbb C^2 \otimes \mathbb C^2$. In the standard basis $|{\uparrow\uparrow}\rangle,|{\uparrow\downarrow}\rangle,|{\downarrow\uparrow}\rangle,|{\downarrow\downarrow}\rangle$ of $\mathbb C^2\otimes \mathbb C^2$ it reads
\begin{equation}
  R = \begin{pmatrix}
    a & 0 & 0 & d\\
    0 & b & c & 0\\
    0 & c & b & 0\\
    d & 0 & 0 & a
  \end{pmatrix}.
\end{equation}
The transfer matrix of the eight-vertex model on the square lattice with $L$ vertical lines with periodic boundary conditions along the horizontal direction is an operator $\mathcal T:V^L \to V^L$, defined as
\begin{equation}
  \mathcal T = \text{tr}_{0}\left(R_{0L}R_{0L-1}\cdots R_{01}\right).
  \label{eqn:TM8V}
\end{equation}
Here $R_{ij}$ is the $R$-matrix acting non-trivially only the factors $V_i$ and $V_j$ in the product space $V_0 \otimes V^L= V_0\otimes V_1\otimes \cdots \otimes V_L$. The trace is taken over the space $V_0$. The transfer matrix is invariant under translations, spin reversal and preserves the spin parity \cite{baxterbook}. This is expressed by the following commutation relations
\begin{equation}
  [\mathcal T,\mathcal S]=[\mathcal T,\mathcal R] = [\mathcal T,\mathcal P] = 0.
  \label{eqn:CRTSymOps}
\end{equation}
Furthermore, one can show \cite{baxterbook} that $\mathcal T$ commutes with its transpose $[\mathcal T,\mathcal T^t]=0$. Hence it is a normal matrix and therefore diagonalisable by means of a unitary transformation.

To investigate in more detail the properties of the transfer matrix, it is often convenient to write the vertex weights of the model in terms of Jacobi theta functions \cite{whittaker:27}:
\begin{subequations}
\begin{align}
  a(u) = \rho \vartheta_4(2\eta,p^2)\vartheta_1(u+2\eta,p^2)\vartheta_4(u,p^2),\\
  b(u) = \rho \vartheta_4(2\eta,p^2)\vartheta_4(u+2\eta,p^2)\vartheta_1(u,p^2),\\
  c(u) = \rho \vartheta_1(2\eta,p^2)\vartheta_4(u+2\eta,p^2)\vartheta_4(u,p^2),\\
  d(u) = \rho \vartheta_1(2\eta,p^2)\vartheta_1(u+2\eta,p^2)\vartheta_1(u,p^2).
\end{align}%
\label{eqn:EllipticParametrisation}%
\end{subequations}%
Here, $\rho$ is a normalisation constant, $\eta$ the so-called crossing parameter, $p$ the elliptic nome and $u$ the spectral parameter. With this parameterisation, the $R$-matrix $R=R(u)$ satisfies the Yang-Baxter equation: $R_{12}(u-v)R_{13}(u)R_{23}(v)= R_{23}(v)R_{13}(u)R_{12}(u-v)$ for all $u,v$. It follows
from the Yang-Baxter equation that transfer matrices with different spectral parameters commute \cite{baxterbook,baxter:72}: Writing $\mathcal T = \mathcal T(u)$, we have
\begin{equation}
  [\mathcal T(u),\mathcal T(v)]=0, \quad \text{for all }u,v.
  \label{eqn:CommTMs}
\end{equation}
Hence, $\mathcal T(u)$ possesses an eigenbasis that is independent of the spectral parameter $u$.

The commutation relation \eqref{eqn:CommTMs} implies that the transfer matrix $\mathcal T(u)$ commutes with the Hamiltonian $H_{\text{\tiny XYZ}}$ of the XYZ spin chain with certain anisotropy parameters \cite{sutherland:70,baxter:71,baxter:72}. Indeed, a standard calculation leads to
\begin{equation}
  \mathcal T(0) = a(0)^L \mathcal S, \quad  \mathcal T(0)^{-1}\mathcal T'(0) = \frac{L(a'(0)+c'(0))}{2a(0)}-\frac{b'(0)}{a(0)} H_{\text{\tiny XYZ}}.
  \label{eqn:TUZero}
\end{equation}
Here, the anisotropy parameters of the spin-chain Hamiltonian are given by
\begin{equation}
  J_x = 1+\frac{d'(0)}{b'(0)}, \quad J_y = 1-\frac{d'(0)}{b'(0)}, \quad J_z = \frac{a'(0)-c'(0)}{b'(0)}.
  \label{eqn:Js}
\end{equation}
Using \eqref{eqn:EllipticParametrisation} (and a few identities between the Jacobi theta functions \cite{whittaker:27}) one finds $J_x = 1+\zeta$ and $J_y = 1-\zeta$ with
\begin{equation}
  \zeta = \left(\frac{\vartheta_1(2\eta,p^2)}{\vartheta_4(2\eta,p^2)}\right)^2= \frac{c(u) d(u)}{a(u)b(u)} ,
  \label{eqn:Zeta}
\end{equation}
and
\begin{equation}
J_z  = \frac{\vartheta_2(2\eta,p^2)\vartheta_3(2\eta,p^2)\vartheta_4(0,p^2)^2}{\vartheta_2(0,p^2)\vartheta_3(0,p^2)\vartheta_4(2\eta,p^2)^2}= \frac{a(u)^2+b(u)^2-c(u)^2-d(u)^2}{2 a(u)b(u) }.
\label{eqn:Jz}
\end{equation}
It follows from \eqref{eqn:CommTMs} that $[\mathcal T(u),H_{\text{\tiny XYZ}}]=0$ for these anisotropy parameters. Hence the transfer matrix and the Hamiltonian can be simultaneously diagonalised.

From now on, we consider the case where the anisotropy parameters of the XYZ chain are parameterised according to \eqref{eqn:CL}. It follows from \eqref{eqn:Zeta} and \eqref{eqn:Jz} that this parameterisation is equivalent to the relation \eqref{eqn:CL8V}, which defines the supersymmetric eight-vertex model. (Furthermore, it corresponds to the value $\eta = \pi/3$ of the crossing parameter.)

Our goal is to prove \cref{thm:MainResult} about the existence of a special transfer-matrix eigenvalue in this case. In the next proposition, we show that if the corresponding eigenvalue problem possesses non-zero solutions then they are necessarily alternate-cyclic zero-energy states of the Hamiltonian $H$. 
\begin{proposition}
  \label{prop:Restriction}
  Let $n\geqslant 1$ and suppose that $|\Psi\rangle\in V^{2n+1}$ is a non-zero solution of the eigenvalue equation
  \begin{equation}
  \mathcal T|\Psi\rangle = (a+b)^{2n+1}|\Psi\rangle,
  \label{eqn:EVEqu}
\end{equation}
  then we have
  \begin{equation}
    \mathcal S|\Psi\rangle = |\Psi\rangle, \quad H_{\text{\tiny \rm XYZ}}|\Psi\rangle =  E_0|\Psi\rangle
  \end{equation}
  with $E_0 = -(2n+1)(3+\zeta^2)/4$ where $\zeta=cd/ab$.
  \begin{proof}
    We use the parameterisation \eqref{eqn:EllipticParametrisation} and thus consider the eigenvalue problem
    \begin{equation}
      \mathcal T(u)|\Psi\rangle = (a(u)+b(u))^{2n+1}|\Psi\rangle
      \label{eqn:TUEVProblem}
    \end{equation}
    for fixed $n\geqslant 1$. Let us suppose that this equation has a non-zero solution $|\Psi\rangle\in V^{2n+1}$. Because of \eqref{eqn:CommTMs} we may suppose without loss of generality that it is independent of the spectral parameter $u$. For $u=0$, we have $\mathcal T(0)|\Psi\rangle = a(0)^{2n+1}|\Psi\rangle$ with $a(0)\neq 0$. It follows from \eqref{eqn:TUZero} that $\mathcal S|\Psi\rangle= |\Psi\rangle$. Next, we differentiate both sides of \eqref{eqn:TUEVProblem} with respect to $u$ and set $u=0$. Using \eqref{eqn:TUZero} we obtain
    \begin{equation}
      H_{\text{\tiny XYZ}}|\Psi\rangle = -(2n+1)\left(1+\frac{a'(0)-c'(0)}{2b'(0)}\right)|\Psi\rangle.
    \end{equation}
    On the right-hand side, we recognise the expression of $J_z$ given in \eqref{eqn:Js}. Since $J_z = (\zeta^2-1)/2$ we find $H_{\text{\tiny XYZ}}|\Psi\rangle=E_0|\Psi\rangle$.
      \end{proof}
\end{proposition}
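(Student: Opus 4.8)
The plan is to use the integrable structure of the model: pass to the elliptic parameterisation \eqref{eqn:EllipticParametrisation}, promote the hypothesis to a statement about the whole one-parameter family $\mathcal T(u)$, and then read off the translation eigenvalue and the XYZ eigenvalue of $|\Psi\rangle$ from the behaviour of $\mathcal T(u)$ near $u=0$.

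First I would write the given non-zero weights as $a=a(u_0),\,b=b(u_0),\,c=c(u_0),\,d=d(u_0)$ for a suitable spectral parameter $u_0$ (the relation \eqref{eqn:CL8V} forces $\eta=\pi/3$), so that the hypothesis reads $\mathcal T(u_0)|\Psi\rangle=(a(u_0)+b(u_0))^{2n+1}|\Psi\rangle$. Since the transfer matrices at different spectral parameters commute by \eqref{eqn:CommTMs} and are normal, the whole family shares a spectral-parameter-independent eigenbasis; this lets one take $|\Psi\rangle$ to be a joint eigenvector, so that $\mathcal T(u)|\Psi\rangle=(a(u)+b(u))^{2n+1}|\Psi\rangle$ holds for all $u$. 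This reduction --- turning an eigenvalue relation at a single point into an identity of analytic functions of $u$ by invoking \eqref{eqn:CommTMs} --- is the one genuinely delicate point, and I expect it to be the main obstacle; everything afterwards is a short expansion at $u=0$.

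Granting this, I would proceed order by order. Because $b(u)$ is proportional to $\vartheta_1(u,p^2)$ one has $b(0)=0$, so the eigenvalue at $u=0$ is $a(0)^{2n+1}$; on the other hand $\mathcal T(0)=a(0)^L\mathcal S$ with $a(0)\neq 0$ and $L=2n+1$ by \eqref{eqn:TUZero}, which yields $\mathcal S|\Psi\rangle=|\Psi\rangle$. Differentiating the eigenvalue equation at $u=0$ and multiplying by $\mathcal T(0)^{-1}$, the left-hand side becomes $\mathcal T(0)^{-1}\mathcal T'(0)$, which by \eqref{eqn:TUZero} equals a scalar minus a multiple of $H_{\text{\tiny XYZ}}$, while the right-hand side becomes $(2n+1)(a'(0)+b'(0))/a(0)$ times the identity (again using $b(0)=0$). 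Solving for $H_{\text{\tiny XYZ}}|\Psi\rangle$ gives $-(2n+1)\bigl(1+\tfrac{a'(0)-c'(0)}{2b'(0)}\bigr)|\Psi\rangle$, in which one recognises the combination defining $J_z$ in \eqref{eqn:Js}; substituting $J_z=(\zeta^2-1)/2$ from \eqref{eqn:CL} then gives $H_{\text{\tiny XYZ}}|\Psi\rangle=E_0|\Psi\rangle$ with $E_0=-(2n+1)(3+\zeta^2)/4$. The theta-function inputs ($b(0)=0$, $a(0)\neq0$, the derivatives feeding \eqref{eqn:Js}) and the manipulation of \eqref{eqn:TUZero} are entirely routine, so the genuine content of the proof lies in the promotion step above.
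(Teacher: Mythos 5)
Your proposal is correct and follows essentially the same route as the paper: pass to the elliptic parameterisation, use the commutativity \eqref{eqn:CommTMs} to take $|\Psi\rangle$ as a $u$-independent joint eigenvector, then evaluate and differentiate at $u=0$ via \eqref{eqn:TUZero} to read off the $\mathcal S$- and $H_{\text{\tiny XYZ}}$-eigenvalues. The ``promotion'' step you flag as delicate is handled in the paper with exactly the same brevity (``we may suppose without loss of generality that it is independent of the spectral parameter $u$''), so your write-up matches the published argument in both substance and level of rigour.
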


\subsection{Transfer matrix and supersymmetry}
\label{sec:TMSUSY}

It was conjectured in \cite{hagendorf:12} that the transfer matrix of the supersymmetric eight-vertex model and the supercharges of the XYZ Hamiltonian with $\zeta = cd/ab$ have a simple commutation relation. Here, we prove this conjecture. The proof relies on a relation between the $R$-matrix of the supersymmetric eight-vertex model and the local supercharge and therefore sheds some light on the connection between integrability and supersymmetry. We refer to \cite{weston:17} for a recent investigation of this connection in the case of the six-vertex model.

We introduce an operator $A:\mathbb C^2 \to \mathbb C^2 \otimes \mathbb C^2$. Its action on the basis states $|{\uparrow}\rangle$ and $|{\downarrow}\rangle$ is 
\begin{equation}
  A|{\uparrow}\rangle = d\left(-\frac{c}{a}|{\uparrow\downarrow}\rangle + |{\downarrow\uparrow}\rangle\right), \quad A|{\downarrow}\rangle = c\left(|{\uparrow\uparrow}\rangle - \frac{d}{b}|{\downarrow\downarrow}\rangle\right).
\end{equation}
On $V_0 \otimes V^L$, we define
\begin{equation}
  A_0^1 = A \otimes \underset{L}{\underbrace{1 \otimes \cdots \otimes 1}}, \quad A_0^2 = \mathcal S A_0^1\mathcal S^{-1}.
\end{equation}
Here and in the following, the translation operator leaves $V_0$ unchanged and only acts on $V^L$.
The operators $A_0^1$ and $A_0^2$ allow us to establish a relation between the $R$-matrix of the eight-vertex model and the local supercharge whose proof is a straightforward computation.
\begin{lemma}
  We have the equality
  \begin{equation}
    R_{02}R_{01}\q_1 + (a+b) \q_1 R_{01} = A_{0}^2 R_{01} +R_{02}A_0^1
    \label{eqn:RMatrixQ}
  \end{equation}
  if and only if the relations \eqref{eqn:CL8V} hold.
 \end{lemma}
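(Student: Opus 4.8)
The plan is to verify \eqref{eqn:RMatrixQ} by a direct evaluation on basis states, arranged so that the relation \eqref{eqn:CL8V} surfaces from a single non-trivial coefficient. Both sides of \eqref{eqn:RMatrixQ} are length-increasing linear maps whose action is supported on the factors $V_0$ and $V_1$ of $V_0\otimes V^L$: by tracking which tensor factors each of $R_{01}$, $R_{02}$, $\q_1$, $A_0^1$ and $A_0^2=\mathcal S A_0^1\mathcal S^{-1}$ touches, one sees that the sites $2,\dots,L$ are inert spectators and that both sides may be regarded as maps $V_0\otimes V_1\to V_0\otimes V_1\otimes V_2$. It therefore suffices to apply them to the four basis vectors $|{\uparrow\uparrow}\rangle,|{\uparrow\downarrow}\rangle,|{\downarrow\uparrow}\rangle,|{\downarrow\downarrow}\rangle$ of $V_0\otimes V_1$ and to compare the resulting vectors in $V_0\otimes V_1\otimes V_2$ component by component.

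First I would record the elementary actions of the building blocks: from the explicit $R$-matrix, $R|{\uparrow\uparrow}\rangle=a|{\uparrow\uparrow}\rangle+d|{\downarrow\downarrow}\rangle$, $R|{\uparrow\downarrow}\rangle=b|{\uparrow\downarrow}\rangle+c|{\downarrow\uparrow}\rangle$, $R|{\downarrow\uparrow}\rangle=c|{\uparrow\downarrow}\rangle+b|{\downarrow\uparrow}\rangle$, $R|{\downarrow\downarrow}\rangle=d|{\uparrow\uparrow}\rangle+a|{\downarrow\downarrow}\rangle$; the local supercharge acts by \eqref{eqn:DefLocalQ}; and $A$ acts by its defining formulas, with $A_0^2$ placing the newly created spin on $V_2$ rather than on $V_1$. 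A useful preliminary remark is that $\q_1$ annihilates every state whose site-$1$ spin is $\uparrow$, so the term $R_{02}R_{01}\q_1$ vanishes on the inputs $|{\uparrow\uparrow}\rangle$ and $|{\downarrow\uparrow}\rangle$, which already trims the bookkeeping.

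Next I would carry out the four evaluations. For each input the two terms on each side of \eqref{eqn:RMatrixQ} expand into linear combinations of the states $|t_0t_1t_2\rangle$, and equating the two sides produces a small system of polynomial identities in $a,b,c,d$. I expect most of these to hold identically in the weights --- this is presumably exactly what the coefficients $-c/a$ and $-d/b$ in the definition of $A$ are engineered to achieve --- with the genuinely constraining one reducing, after clearing the denominators $a$ and $b$, to $(a^2+ab)(b^2+ab)=(c^2+ab)(d^2+ab)$, equivalently $ab(a^2+b^2-c^2-d^2)=c^2d^2-a^2b^2$. Since the computation is an equivalence at every step, this establishes both implications of the ``if and only if''.

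The only real difficulty is the bookkeeping: keeping the tensor-factor labels consistent across the length-increasing maps and the conjugation $\mathcal S A_0^1\mathcal S^{-1}$, and avoiding sign or index slips while expanding the roughly eight vector equalities. A convenient way to keep this under control is to use the $\mathbb Z_2$ grading of $V_0\otimes V_1\otimes V_2$ by the parity of the number of down spins: $R_{01}$ and $R_{02}$ preserve this parity whereas $\q_1$, $A_0^1$ and $A_0^2$ reverse it, so for each input only the four components of the opposite parity can be non-zero; treating the two terms on each side in parallel then makes the cancellations visible as they occur rather than only at the end.
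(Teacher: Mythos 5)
Your proposal is correct and is exactly the argument the paper has in mind: the paper offers no written proof beyond declaring the identity ``a straightforward computation'', and your reduction to a check on the four basis vectors of $V_0\otimes V_1$ (with sites $2,\dots,L$ as spectators) is that computation, with the single constraining component correctly identified as $ab(a^2+ab+b^2-c^2-d^2)=c^2d^2$, i.e.\ \eqref{eqn:CL8V}. I verified the expansion: fourteen of the component equations hold identically once $\zeta=cd/ab$ is used, and the remaining two (the coefficients of $|{\uparrow\uparrow\uparrow}\rangle$ and $|{\downarrow\uparrow\uparrow}\rangle$ for the inputs $|{\uparrow\downarrow}\rangle$ and $|{\downarrow\downarrow}\rangle$) both reduce to \eqref{eqn:CL8V}, confirming the ``if and only if''.
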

 By means of this lemma, we prove the following commutation relation between the supercharge and the transfer matrix of the eight-vertex model:
\begin{proposition}
  \label{prop:TQRelation}
The transfer matrix of the supersymmetric eight-vertex model obeys the commutation relation 
  \begin{equation}
    \mathcal T \Q+(a+b)\Q \mathcal T = 0
    \label{eqn:TQRelation}
  \end{equation}
  on $V^L$ for each $L\geqslant 1$.
  \begin{proof}
We notice that this relation trivially holds on any eigenspace of the translation operator that is not equal to the space of alternate-cyclic states.
  
    It is therefore sufficient to prove the relation on $W^L$. To this end, we multiply the relation \eqref{eqn:RMatrixQ} on the left by the product of $R$-matrices $R_{0L+1}\cdots R_{03}$ and take the trace over the space $V_0$. Using the identity $R_{0j}\q_1 = \q_1 R_{0j-1}$ for each $j=3,\dots,L+1$, we obtain
    \begin{equation}
      \mathcal T \q_1 +(a+b)\q_1 \mathcal T = \text{tr}_0\left(R_{0L+1}\cdots R_{03}A_{0}^2 R_{01}\right)+\text{tr}_0\left(R_{0L+1}\cdots R_{03}R_{02}A_{0}^1\right).
      \label{eqn:Intermediate}
    \end{equation}
    We define the operator $\mathcal A:V^L\to V^{L+1}$ as 
    \begin{equation}
      \mathcal A =  \text{tr}_0\left(R_{0L+1}\cdots R_{03}R_{02}A_{0}^1\right),
    \end{equation}
     and rewrite \eqref{eqn:Intermediate} in terms of $\mathcal A$ and the translation operator $\mathcal S$. Using the relation $\mathcal S A_0^1 \mathcal S^{-1} = A_0^2$ and the cyclic property of the trace operation, we find
    \begin{equation}
      \mathcal T \q_1 +(a+b)\q_1 \mathcal T = \mathcal S\mathcal A \mathcal S^{-1} +\mathcal A.
    \end{equation}
    By conjugation with $\mathcal S^{j-1}$ this equality generalises to 
     \begin{equation}
      \mathcal T \q_j +(a+b)\q_j \mathcal T = \mathcal S^j\mathcal A \mathcal S^{-j} +\mathcal S^{j-1}\mathcal A \mathcal S^{-(j-1)}, \quad j=0,\dots,L.
    \end{equation}
    Here, we used \eqref{eqn:DefineLocalQ} and the commutation relation \eqref{eqn:CRTSymOps} between the transfer matrix and the translation operator. We take an alternating sum of these equalities and obtain
    \begin{equation}
      \mathcal T \left(\sum_{j=0}^{L}(-1)^j \q_j\right)+(a+b) \left(\sum_{j=0}^{L}(-1)^j \q_j\right)\mathcal T = (-1)^{L} \mathcal S^{L}\mathcal A \mathcal S^{-L} + \mathcal S^{-1} \mathcal A \mathcal S.
    \end{equation}
    The expression on the right-hand side can be simplified. Indeed, we have $\mathcal S^L=1$ on $V^L$ and $\mathcal S^{L+1}=1$ on $V^{L+1}$. Hence, we obtain $\mathcal S^{L}\mathcal A \mathcal S^{-L} = \mathcal S^{-1}\mathcal S^{L+1}\mathcal A \mathcal S^{-L}=\mathcal S^{-1}\mathcal A$. We thus find
    \begin{equation}
      \mathcal T \left(\sum_{j=0}^{L}(-1)^j \q_j\right)+(a+b) \left(\sum_{j=0}^{L}(-1)^j \q_j\right)\mathcal T = \mathcal S^{-1}\mathcal A \left((-1)^L+\mathcal S\right).
    \end{equation}
   On $W^L$ the left-hand side is, up to a factor, equal to $\mathcal T \Q+(a+b)\Q \mathcal T$. The right-hand side vanishes on $W^L$. This proves that the commutation relation \eqref{eqn:TQRelation} holds on the subspace of alternate-cyclic states.
         \end{proof}
\end{proposition}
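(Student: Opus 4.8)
The plan is to lift the local identity \eqref{eqn:RMatrixQ} of the preceding lemma to the full transfer matrix by dressing it with a string of $R$-matrices and taking the partial trace over the auxiliary space $V_0$, which is exactly the operation that defines $\mathcal T$ in \eqref{eqn:TM8V}. First I would dispose of the trivial part: on every eigenspace of $\mathcal S$ in $V^L$ other than $W^L$ the supercharge $\Q$ vanishes by definition, and since $[\mathcal T,\mathcal S]=0$ by \eqref{eqn:CRTSymOps} the transfer matrix preserves each such eigenspace, so both $\mathcal T\Q$ and $(a+b)\Q\mathcal T$ vanish there. It therefore remains to prove \eqref{eqn:TQRelation} on the subspace $W^L$ of alternate-cyclic states.

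On $W^L$ I would multiply \eqref{eqn:RMatrixQ} on the left by the product $R_{0L+1}\cdots R_{03}$ and take $\text{tr}_0$. The crucial point of bookkeeping is that $\q_1:V^L\to V^{L+1}$ acts only on the first site, so once it has acted the former sites $2,\dots,L$ are relabelled $3,\dots,L+1$; this is the content of $R_{0j}\q_1=\q_1 R_{0j-1}$ for $j=3,\dots,L+1$, which lets me pull $\q_1$ through the string on the left-hand side and recognise $\mathcal T\q_1+(a+b)\q_1\mathcal T$. On the right-hand side I would introduce $\mathcal A=\text{tr}_0\bigl(R_{0L+1}\cdots R_{03}R_{02}A_0^1\bigr):V^L\to V^{L+1}$ and use $A_0^2=\mathcal S A_0^1\mathcal S^{-1}$ together with the cyclicity of the trace to rewrite the remaining trace as $\mathcal S\mathcal A\mathcal S^{-1}$, obtaining $\mathcal T\q_1+(a+b)\q_1\mathcal T=\mathcal S\mathcal A\mathcal S^{-1}+\mathcal A$.

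Next I would conjugate this identity by $\mathcal S^{j-1}$; using \eqref{eqn:DefineLocalQ} and $[\mathcal T,\mathcal S]=0$ this produces the analogous relation for every $\q_j$, $j=0,\dots,L$, with right-hand side $\mathcal S^j\mathcal A\mathcal S^{-j}+\mathcal S^{j-1}\mathcal A\mathcal S^{-(j-1)}$. Taking the alternating sum $\sum_{j=0}^L(-1)^j$ makes the right-hand side telescope down to the two boundary terms $(-1)^L\mathcal S^L\mathcal A\mathcal S^{-L}+\mathcal S^{-1}\mathcal A\mathcal S$; since $\mathcal S^L=1$ on $V^L$ and $\mathcal S^{L+1}=1$ on $V^{L+1}$ this collapses to $\mathcal S^{-1}\mathcal A\bigl((-1)^L+\mathcal S\bigr)$, which annihilates $W^L$ because $\mathcal S$ acts there as $(-1)^{L+1}$. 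On the left-hand side the same alternating sum equals $\sqrt{(L+1)/L}\,\bigl(\mathcal T\Q+(a+b)\Q\mathcal T\bigr)$ by \eqref{eqn:DefQonWN}, so \eqref{eqn:TQRelation} follows on $W^L$ and hence on all of $V^L$.

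Granting the lemma \eqref{eqn:RMatrixQ}, I expect the only genuinely delicate step to be the index arithmetic: keeping track of the shift in site labels caused by the length-increasing operators $\q_1$ and $\mathcal A$ as they pass through the string of $R$-matrices and around the trace, and matching the powers of $\mathcal S$ on the two spaces $V^L$ and $V^{L+1}$ in the telescoping. The real substance of the result sits, however, in the lemma itself, whose proof I would carry out by a direct check on the eight basis vectors of $V_0\otimes V_1$, verifying along the way that the identity holds precisely when the supersymmetry constraint \eqref{eqn:CL8V} is satisfied.
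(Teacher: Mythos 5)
Your proposal is correct and follows essentially the same route as the paper: reduce to $W^L$ using $[\mathcal T,\mathcal S]=0$ and the vanishing of $\Q$ elsewhere, dress the local identity \eqref{eqn:RMatrixQ} with $R_{0L+1}\cdots R_{03}$, trace over $V_0$, conjugate by powers of $\mathcal S$, and telescope the alternating sum so that the boundary terms are killed by $\mathcal S=(-1)^{L+1}$ on $W^L$. The bookkeeping you flag (the shift $R_{0j}\q_1=\q_1 R_{0j-1}$, the identities $\mathcal S^L=1$ on $V^L$ and $\mathcal S^{L+1}=1$ on $V^{L+1}$, and the factor $\sqrt{(L+1)/L}$) is exactly what the paper uses.
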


Next, we consider operators that have a commutation relation similar to \eqref{eqn:TQRelation} with the supercharge. We show that the evaluation of their expectation values with respect to the alternate-cyclic zero-energy states can be reduced to matrix elements between the corresponding (co)homology representatives \cite{hori:03}.
\begin{proposition}
\label{prop:MatrixElementOp}
  Let $\mathcal A$ be an operator that maps $V^L$ to $V^L$ for each $L\geqslant 2$ and commutes with the supercharge according to
  \begin{equation}
  \mathcal A \Q = \lambda \Q \mathcal A ,
  \label{eqn:CommQA}
\end{equation}
where $\lambda$ is a non-zero complex number. Furthermore, let $|\Psi\rangle \in W^L$ be a zero-energy state whose decompositions \eqref{eqn:CohomRep} and \eqref{eqn:HomRep} are $|\Psi\rangle = |\Phi\rangle+\Q|\alpha\rangle$ and $|\Psi\rangle = |\Phi'\rangle+\Q^\dagger|\beta\rangle$, respectively. Then we have
\begin{equation}
  \langle \Psi|\mathcal A |\Psi\rangle =  \langle \Phi'|\mathcal A |\Phi \rangle.
  \label{eqn:MatrixElementOp}
\end{equation}
\begin{proof}
  The proof consists of a straightforward computation. First, we use \eqref{eqn:HomRep} to write
\begin{equation}
  \langle \Psi|\mathcal A |\Psi\rangle = \langle \Phi'|\mathcal A |\Psi\rangle+(\langle \beta|\Q)\mathcal A |\Psi\rangle = \langle \Phi'|\mathcal A |\Psi\rangle+\lambda^{-1}\langle \beta|\mathcal A (\Q|\Psi\rangle).
\end{equation}
Because of \eqref{eqn:QOnGS} the last term on the right-hand side vanishes. 

Second, with the help of \eqref{eqn:CohomRep} we find
\begin{equation}
  \langle \Psi|\mathcal A |\Psi\rangle=\langle \Phi'|\mathcal A |\Psi\rangle = \langle \Phi'|\mathcal A|\Phi\rangle + \langle \Phi'|\mathcal A (\Q |\alpha\rangle) = \langle \Phi'|\mathcal A|\Phi\rangle + \lambda(\langle \Phi'|\Q)\mathcal A |\alpha\rangle.
\end{equation}
Because of $\Q^\dagger|\Phi'\rangle = 0$, we conclude that the second term on the right-hand side equals zero. This leads to \eqref{eqn:MatrixElementOp}.
\end{proof}
\end{proposition}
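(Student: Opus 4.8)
The plan is to use the two available decompositions of $|\Psi\rangle$ simultaneously: the homology decomposition \eqref{eqn:HomRep} for the bra vector and the cohomology decomposition \eqref{eqn:CohomRep} for the ket vector, and then to show that the spurious terms that appear are forced to vanish by the structural relations at hand, namely the defining equations \eqref{eqn:QOnGS} of a zero-energy state and the vanishing $\Q^\dagger|\Phi'\rangle=0$ that characterises a homology representative. The commutation relation \eqref{eqn:CommQA}, equivalently written $\Q\mathcal A=\lambda^{-1}\mathcal A\Q$, is exactly what allows one to shuttle $\Q$ past $\mathcal A$ so that it lands on $|\Psi\rangle$ or on $\langle\Phi'|$ and produces zero.

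Concretely, I would first take the Hermitian adjoint of \eqref{eqn:HomRep} to write $\langle\Psi|=\langle\Phi'|+\langle\beta|\Q$, using $(\Q^\dagger)^\dagger=\Q$, and insert this into the left slot: $\langle\Psi|\mathcal A|\Psi\rangle=\langle\Phi'|\mathcal A|\Psi\rangle+\langle\beta|\Q\mathcal A|\Psi\rangle$. In the second term I replace $\Q\mathcal A$ by $\lambda^{-1}\mathcal A\Q$, so that $\Q$ hits $|\Psi\rangle$ directly and the term drops out by \eqref{eqn:QOnGS}. Next I substitute the cohomology decomposition $|\Psi\rangle=|\Phi\rangle+\Q|\alpha\rangle$ into the surviving term: $\langle\Phi'|\mathcal A|\Psi\rangle=\langle\Phi'|\mathcal A|\Phi\rangle+\langle\Phi'|\mathcal A\Q|\alpha\rangle$. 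In the last term I use \eqref{eqn:CommQA} in the form $\mathcal A\Q=\lambda\Q\mathcal A$ to move $\Q$ to the far left, obtaining $\lambda\,(\langle\Phi'|\Q)\,\mathcal A|\alpha\rangle$; since $\langle\Phi'|\Q=(\Q^\dagger|\Phi'\rangle)^\dagger$ and $|\Phi'\rangle$ is a homology representative, this vanishes. What remains is exactly $\langle\Phi'|\mathcal A|\Phi\rangle$, which is \eqref{eqn:MatrixElementOp}.

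I do not expect a genuine obstacle: the argument is pure bookkeeping once one commits to pairing the homology representative on the left with the cohomology representative on the right. The one point requiring attention is keeping track of which Hilbert space each operator acts on — $\Q$ raises the chain length by one and $\Q^\dagger$ lowers it, so for instance $\langle\beta|\Q$ pairs $\langle\beta|\in(V^{L+1})^\ast$ with $\Q:V^L\to V^{L+1}$, whereas $\langle\Phi'|\Q$ pairs $\langle\Phi'|\in(V^L)^\ast$ with $\Q:V^{L-1}\to V^L$ — so that \eqref{eqn:CommQA} must be invoked on the correct space at each step. As an alternative one could expand $\langle\Psi|\mathcal A|\Psi\rangle$ into the four cross terms coming from $\langle\Phi'|+\langle\beta|\Q$ and $|\Phi\rangle+\Q|\alpha\rangle$ and dispose of them one by one; in that route the $\langle\beta|\Q\mathcal A\Q|\alpha\rangle$ term is killed by the nilpotency $\Q^2=0$ of \eqref{eqn:Nilpotency}, but the two-step route above never needs nilpotency and is slightly shorter.
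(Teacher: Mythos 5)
Your proof is correct and is essentially identical to the paper's: both use the adjoint of the homology decomposition on the bra, kill the $\langle\beta|\Q\mathcal A|\Psi\rangle$ term by commuting $\Q$ onto $|\Psi\rangle$ via \eqref{eqn:CommQA} and \eqref{eqn:QOnGS}, then insert the cohomology decomposition on the ket and kill the remaining cross term by commuting $\Q$ onto $\langle\Phi'|$ and using $\Q^\dagger|\Phi'\rangle=0$. Your bookkeeping of which space each instance of \eqref{eqn:CommQA} is invoked on is also handled correctly.
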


\subsection{The eigenvalue computation}
\label{sec:EV}

We now compute the action of the transfer matrix $\mathcal T$ on the space of alternate-cyclic zero-energy states. Since $[\mathcal T, H_{\text{\tiny XYZ}}]=[\mathcal T, H]=0$, 
this space is stable under the action of $\mathcal T$. Hence, we may deduce its action from the evaluation of the matrix elements of $\mathcal T$ between the states $|\Psi_n\rangle$ and $|\bar \Psi_n\rangle$. From \cref{prop:Symmetries} and \eqref{eqn:CRTSymOps} we infer that
\begin{equation}
  \langle \Psi_n|\mathcal T|\bar \Psi_n\rangle = \langle \bar \Psi_n|\mathcal T| \Psi_n\rangle = 0.
\end{equation}
It immediately follows that both $|\Psi_n\rangle$ and $|\bar\Psi_n\rangle$ are eigenstates of the transfer matrix. To find the corresponding eigenvalues, we consider the diagonal matrix elements
\begin{equation}
  \Theta_n = \frac{\langle \bar \Psi_n|\mathcal T|\bar \Psi_n\rangle}{\langle \bar \Psi_n|\bar \Psi_n\rangle} = \frac{\langle  \Psi_n|\mathcal T| \Psi_n\rangle}{\langle \Psi_n| \Psi_n\rangle}.
  \label{eqn:EVRitz}
\end{equation}
Here, the equality of the matrix elements for $|\Psi_n\rangle$ and $|\bar \Psi_n\rangle$ is again a consequence of \cref{prop:Symmetries} and \eqref{eqn:CRTSymOps}. For $a+b \neq 0$, we apply \cref{prop:MatrixElementOp} and reduce the matrix element for $|\bar \Psi_n\rangle$ to a matrix element between representatives. Using \cref{prop:HomRep,prop:AlternativeCohomRep}, we obtain the equality
\begin{equation}
  \Theta_n =\langle \bar \Phi_n(\zeta^{-1})|\mathcal T|{\uparrow\cdots\uparrow}\rangle,
  \label{eqn:ThetaIntermediate}
\end{equation}
where $|\bar \Phi_n(\zeta)\rangle$ is the state defined in \eqref{eqn:DefPhi}.
The case where $a+b =0$ can be treated as a suitable limit of this result. 
The resulting matrix element can be explicitly computed:
\begin{proposition}
 \label{prop:Theta}
 For each $n\geqslant 1$, we have $\Theta_n = (a+b)^{2n+1}$.
 \begin{proof}
   We write $|\bar \Phi_n(\zeta^{-1})\rangle$ as a linear combination of the canonical basis states \eqref{eqn:BasisVectors}.  To this end, we introduce the notation
\begin{equation}
  ||x_1,\dots,x_{k}\rrangle = |{\uparrow\cdots\uparrow\underset{x_1}{\downarrow}\uparrow\cdots \uparrow\underset{x_2}{\downarrow}\uparrow\underset{\dots}{\cdots} \uparrow\underset{x_{k}}{\downarrow}\uparrow\cdots \uparrow}\rangle,
\end{equation}
where $k=1,\dots,L$. Using \eqref{eqn:DefPhi}, we find
\begin{equation}
  |\bar \Phi_n(\zeta^{-1})\rangle = |{\uparrow\cdots\uparrow}\rangle+\sum_{m=1}^{n}\zeta^{-m} \sum_{1\leqslant x_1<\cdots<x_{2m}\leqslant 2n+1} ||x_1,\dots,x_{2m}\rrangle.
\end{equation}
Hence, we obtain
\begin{equation}
  \Theta_n =\langle {\uparrow\cdots\uparrow}|\mathcal T|{\uparrow\cdots\uparrow}\rangle+\sum_{m=1}^n\zeta^{-m}\sum_{1\leqslant x_1<\cdots<x_{2m}\leqslant 2n+1} \llangle x_1,\dots,x_{2m}||\mathcal T|{\uparrow\cdots\uparrow}\rangle.
  \label{eqn:ThetaSum}
\end{equation}

The matrix elements on the right-hand side of this equality are readily evaluated. We have
\begin{subequations}
\begin{equation}
  \langle {\uparrow\cdots\uparrow}|\mathcal T|{\uparrow\cdots\uparrow}\rangle = a^{2n+1}+b^{2n+1}.
\end{equation}
Furthermore, for $m=1,\dots,n$ we obtain
\begin{equation}
  \llangle x_1,\dots,x_{2m}||\mathcal T|{\uparrow \cdots \uparrow}\rangle = \zeta^m \left(\alpha(x_1,\dots,x_{2m})+\delta(x_1,\dots,x_{2m})\right),
\end{equation}
with
\begin{align}
  \alpha(x_1,\dots,x_{2m}) &= a^{x_2-x_1}b^{x_3-x_2}\cdots a^{x_{2m}-x_{2m-1}}b^{2n+1-(x_{2m}-x_1)},\\
  \delta(x_1,\dots,x_{2m}) &= b^{x_2-x_1}a^{x_3-x_2}\cdots b^{x_{2m}-x_{2m-1}}a^{2n+1-(x_{2m}-x_1)}.
\end{align}%
\label{eqn:MatrixElements8VTM}%
\end{subequations}%
We substitute these expressions into \eqref{eqn:ThetaSum} and find
\begin{equation}
  \Theta_n = a^{2n+1}+b^{2n+1} +\sum_{m=1}^n\sum_{1\leqslant x_1<\cdots<x_{2m}\leqslant 2n+1} (\alpha(x_1,\dots,x_{2m})+\delta(x_1,\dots,x_{2m})).
  \label{eqn:ThetaSum2}
\end{equation}

The evaluation of this sum reduces to a combinatorial problem. To see this, we consider the set of all words $\gamma=(\gamma_1,\dots,\gamma_{2n+1})$ of length $2n+1$ with letters $\gamma_j\in \{a,b\}$. We assign a weight $\omega(\gamma) = \gamma_1\gamma_2\cdots \gamma_{2n+1}$ to each word $\gamma$. Two simple examples are $\gamma=(a,a,\dots,a)$ and $\gamma=(b,b,\dots,b)$ whose weights are $\omega(\gamma)=a^{2n+1}$ and $\omega(\gamma)=b^{2n+1}$, respectively. Every other word contains both letters $a$ and $b$. For each such word there is an integer $m=1,\dots, n$ and a sequence of integers $1\leqslant x_1 < x_2 < \cdots < x_{2m}\leqslant 2n+1$ such that either
 \begin{align*}
   \gamma = (b,\dots,b,\underset{x_1}{a},\dots,a,\underset{x_2}{b},\dots,b,\dots,\underset{x_{2m-1}}{a},\dots,a,\underset{x_{2m}}{b},\dots,b), \quad \omega(\gamma) = \alpha(x_1,\dots,x_{2m}),
\end{align*}
or
\begin{align*}
   \gamma = (a,\dots,a,\underset{x_1}{b},\dots,b,\underset{x_2}{a},\dots,a,\dots,\underset{x_{2m-1}}{b},\dots,b,\underset{x_{2m}}{a},\dots,a), \quad \omega(\gamma) = \delta(x_1,\dots,x_{2m}).
 \end{align*}
 We conclude that the sum \eqref{eqn:ThetaSum2} can be written as a sum over all words $\gamma$. The terms to sum up are their corresponding weights $\omega(\gamma)$. Hence we find
 \begin{equation}
  \Theta_n = \sum_{\gamma_1 = a,b}\cdots \sum_{\gamma_{2n+1}=a,b} \gamma_1\cdots \gamma_{2n+1} = (a+b)^{2n+1}.
\end{equation}
This concludes the proof.
 \end{proof}
\end{proposition}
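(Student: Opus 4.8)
The plan is to start from the identity $\Theta_n=\langle\bar\Phi_n(\zeta^{-1})|\mathcal T|{\uparrow\cdots\uparrow}\rangle$ of \eqref{eqn:ThetaIntermediate} and to evaluate its right-hand side explicitly (the degenerate case $a+b=0$ being obtained as a limit, as already noted). First I would expand $|\bar\Phi_n(\zeta^{-1})\rangle$ in the canonical basis \eqref{eqn:BasisVectors}. Since for odd $L=2n+1$ the projector $\tfrac12(1-\mathcal P)$ retains exactly the basis states with an \emph{even} number of down spins, and since $\mathcal M(\lambda)$ rescales a basis state with $k$ down spins by $\lambda^{(2n+1)+k}$, the definition \eqref{eqn:DefPhi} yields $|\bar\Phi_n(\zeta^{-1})\rangle=|{\uparrow\cdots\uparrow}\rangle+\sum_{m=1}^{n}\zeta^{-m}\sum_{1\leqslant x_1<\cdots<x_{2m}\leqslant 2n+1}||x_1,\dots,x_{2m}\rrangle$, where $||x_1,\dots,x_{2m}\rrangle$ denotes the basis state with down spins exactly at sites $x_1,\dots,x_{2m}$. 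Hence $\Theta_n$ reduces to the weighted sum \eqref{eqn:ThetaSum} of the transfer-matrix matrix elements $\langle {\uparrow\cdots\uparrow}|\mathcal T|{\uparrow\cdots\uparrow}\rangle$ and $\llangle x_1,\dots,x_{2m}||\mathcal T|{\uparrow\cdots\uparrow}\rangle$.

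Next I would compute these matrix elements graphically. Writing $\langle t_1\cdots t_L|\mathcal T|{\uparrow\cdots\uparrow}\rangle=\sum_{h_1,\dots,h_L}\prod_{j=1}^{L}R^{h_{j+1}t_j}_{h_j\,\uparrow}$ with $h_{L+1}=h_1$, and reading off from the $R$-matrix that, when the vertical-in edge is up, the horizontal arrow either keeps its value (weight $a$ if it stays up, $b$ if it stays down, with $t_j=\uparrow$) or flips it (weight $d$ for $\uparrow\to\downarrow$, weight $c$ for $\downarrow\to\uparrow$, with $t_j=\downarrow$), one sees that the output $t$ carries down spins precisely at the vertices where the horizontal arrow flips. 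The number of flips around the cycle is even, so only outputs with an even number of down spins occur: the all-up output comes from the two constant horizontal configurations and contributes $a^{2n+1}+b^{2n+1}$, whereas a prescribed flip set $x_1<\cdots<x_{2m}$ is realized by exactly two horizontal configurations — the two ways of $2$-colouring the $2m$ arcs between consecutive flips by $\uparrow/\downarrow$ — giving the contributions $\zeta^m\alpha(x_1,\dots,x_{2m})$ and $\zeta^m\delta(x_1,\dots,x_{2m})$ of \eqref{eqn:MatrixElements8VTM}; here the factor $\zeta^m=c^md^m/(a^mb^m)$ collects the $m$ weights $c$ and the $m$ weights $d$ at the flip vertices and compensates the $a,b$ powers borrowed from the run lengths, while the block $b^{2n+1-(x_{2m}-x_1)}$ (or its $\delta$-analogue) accounts for the arc straddling the seam between sites $2n+1$ and $1$.

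Substituting this into \eqref{eqn:ThetaSum}, the prefactors $\zeta^{-m}$ cancel the $\zeta^{m}$ coming from the matrix elements, so that $\Theta_n=a^{2n+1}+b^{2n+1}+\sum_{m=1}^{n}\sum_{1\leqslant x_1<\cdots<x_{2m}\leqslant 2n+1}\bigl(\alpha(x_1,\dots,x_{2m})+\delta(x_1,\dots,x_{2m})\bigr)$, which is \eqref{eqn:ThetaSum2}. The last step is combinatorial: to each word $\gamma=(\gamma_1,\dots,\gamma_{2n+1})\in\{a,b\}^{2n+1}$ assign the monomial $\omega(\gamma)=\gamma_1\cdots\gamma_{2n+1}$. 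The two constant words contribute $a^{2n+1}$ and $b^{2n+1}$; any other word is determined by its maximal runs of equal letters, whose boundaries form a set $x_1<\cdots<x_{2m}$ of even size, and then $\omega(\gamma)=\alpha(x_1,\dots,x_{2m})$ when $\gamma_1=b$ and $\omega(\gamma)=\delta(x_1,\dots,x_{2m})$ when $\gamma_1=a$, this assignment being a bijection onto all such data. Therefore $\Theta_n=\sum_{\gamma\in\{a,b\}^{2n+1}}\omega(\gamma)=(a+b)^{2n+1}$, as claimed.

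I expect the main obstacle to be the second paragraph: organising the graphical evaluation of $\llangle x_1,\dots,x_{2m}||\mathcal T|{\uparrow\cdots\uparrow}\rangle$ so that the cyclic ("wraparound") contribution and the exact count of two admissible horizontal configurations are handled cleanly, and so that the resulting products match the precise exponents in \eqref{eqn:MatrixElements8VTM}. Everything else — the basis expansion, the $\zeta^{\pm m}$ cancellation, and the final word-counting bijection — is routine bookkeeping.
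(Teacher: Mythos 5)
Your proposal is correct and follows essentially the same route as the paper: expand $|\bar\Phi_n(\zeta^{-1})\rangle$ in the canonical basis, evaluate the matrix elements $\llangle x_1,\dots,x_{2m}||\mathcal T|{\uparrow\cdots\uparrow}\rangle = \zeta^m(\alpha+\delta)$, cancel the $\zeta^{\pm m}$ factors, and resum via the bijection with words in $\{a,b\}^{2n+1}$. The only difference is that you spell out the graphical derivation of the matrix elements (the two admissible horizontal configurations per flip set), which the paper dismisses as ``readily evaluated''; your account of it is consistent with \eqref{eqn:MatrixElements8VTM}.
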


\begin{proof}[Proof of \cref{thm:MainResult}]
According to \cref{prop:Restriction} every eigenstate of the transfer matrix of the supersymmetric eight-vertex model with the eigenvalue $\Theta_n = (a+b)^{2n+1}$ is an alternate-cyclic zero-energy state of the Hamiltonian $H$. The space of these states is spanned by $|\Psi_n\rangle$ and $|\bar \Psi_n\rangle$. According to \cref{prop:Theta}, we have
\begin{equation}
  \mathcal T|\Psi_n\rangle = (a+b)^{2n+1}|\Psi_n\rangle, \quad \mathcal T|\bar \Psi_n\rangle = (a+b)^{2n+1}|\bar \Psi_n\rangle.
\end{equation}
Hence, the eigenspace of $\Theta_n$ is two dimensional and spanned by $|\Psi_n\rangle$ and $|\bar\Psi_n\rangle$. Furthermore, according to \cref{thm:GS} it is the space of ground states of the XYZ Hamiltonian with the anisotropy parameters \eqref{eqn:CL}. The ground-state eigenvalue is $E_0=-(2n+1)(3+\zeta^2)/4$.
\end{proof}
Throughout this article, we have considered non-zero vertex weights. However, for $L=2n+1$ the transfer matrix still possesses the eigenvalue $\Theta_n$ if some of the vertex weights are zero. The reason is that the eigenvalues are continuous functions of the entries of $\mathcal T$. Hence, they are continuous with respect to $a,b,c,d$. In particular, $\Theta_n$ is still a transfer-matrix eigenvalue as the vertex weight $d$ tends to zero. In this limit, the supersymmetric eight-vertex model reduces to the six-vertex model with the anisotropy parameter $\Delta=-1/2$. In this special case, the existence of the eigenvalue can be proven by other techniques \cite{razumov:07}.

We conclude this section by proving that if the vertex weights are positive then $\Theta_n$ is the largest eigenvalue of the transfer matrix:
\begin{theorem}
 If the vertex weights are positive, $a,b,c,d>0$, then for each $L=2n+1,$ $n\geqslant 1$, $\Theta_n=(a+b)^{2n+1}$ is the largest eigenvalue of the transfer matrix of the supersymmetric eight-vertex model.
\begin{proof}
We denote by $\mathcal T_{\pm}$ the restriction of the transfer matrix to the eigenspace of the spin-parity operator with eigenvalue $\pm 1$. The matrix elements of $\mathcal T_\pm$ with respect to the canonical basis of this eigenspace can be explicitly computed \cite{baxterbook}. If $a,b,c,d>0$ then these matrix elements are  positive and thus $\mathcal T_\pm$ is a positive matrix.   It follows from the Perron-Frobenius theorem \cite{meyer:00} that $\mathcal T_\pm$ has a largest positive eigenvalue $\Theta_\pm$ which is non-degenerate. There is a unique vector $|\Phi_\pm\rangle$ with positive components and norm $1$ such that
  \begin{equation}
    \mathcal T_\pm|\Phi_\pm\rangle = \Theta_\pm |\Phi_\pm\rangle .
  \end{equation}
Furthermore, except for positive multiples of $|\Phi_\pm\rangle$, the matrix $\mathcal T_\pm$ has no other eigenstate with positive components.
  
Let us consider the state $\ket{\Psi_\pm}$ defined in the proof of \cref{thm:GS}. It has positive components and norm one. Furthermore, according to \cref{thm:MainResult} it is an eigenstate of $\mathcal T_\pm$:
\begin{equation}
\mathcal T_\pm \ket{\Psi_\pm} = \Theta_n \ket{\Psi_\pm}.
\end{equation}
By the uniqueness of $\ket{\Phi_\pm}$, we conclude that $\ket{\Phi_\pm} = |\Psi_\pm\rangle$ and consequently, $\Theta_\pm = \Theta_n$. 
\end{proof}
\end{theorem}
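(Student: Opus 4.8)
The plan is to exploit the block structure of the transfer matrix with respect to the spin-parity operator $\mathcal P$ and to invoke the Perron--Frobenius theorem in each block, in the same spirit as the proof of \cref{thm:GS}. Since $[\mathcal T,\mathcal P]=0$, the transfer matrix restricts to operators $\mathcal T_+$ and $\mathcal T_-$ on the two eigenspaces of $\mathcal P$, and the spectrum of $\mathcal T$ is the union of the spectra of $\mathcal T_+$ and $\mathcal T_-$. Hence it suffices to show that the largest eigenvalue of each $\mathcal T_\pm$ equals $\Theta_n$.

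First I would check that, for positive vertex weights, every matrix element of $\mathcal T_\pm$ in the canonical basis \eqref{eqn:BasisVectors} of the corresponding eigenspace is strictly positive, so that $\mathcal T_\pm$ is a positive matrix in the sense of \cite{meyer:00,baxterbook}. This is a short combinatorial observation: the entry of $\mathcal T$ between $|s_1\cdots s_L\rangle$ and $|s_1'\cdots s_L'\rangle$ is the sum over admissible horizontal edge configurations of the products of the vertex weights along the row; the eight-vertex rule determines each horizontal edge from the previous one once the bottom and top rows are fixed, the periodicity around the $L$ sites is consistent exactly when the two rows have the same spin parity, and in that case there are precisely two admissible configurations. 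Each contributes a product of $L$ positive weights, so every entry of $\mathcal T_\pm$ is a sum of two positive monomials in $a,b,c,d$, hence strictly positive.

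Next I would apply the Perron--Frobenius theorem for positive matrices \cite{meyer:00}. For each sign it yields a simple, largest eigenvalue $\Theta_\pm>0$ of $\mathcal T_\pm$, together with an eigenvector $|\Phi_\pm\rangle$ --- unique up to positive scaling --- having strictly positive components, and no other eigenvector of $\mathcal T_\pm$ has nonnegative components. To finish, I would invoke the states $|\Psi_+\rangle$ and $|\Psi_-\rangle$ constructed in the proof of \cref{thm:GS}: they lie in the $\mathcal P=+1$ and $\mathcal P=-1$ sectors, have strictly positive components, and by \cref{thm:MainResult} satisfy $\mathcal T|\Psi_\pm\rangle=\Theta_n|\Psi_\pm\rangle$. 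The uniqueness clause of Perron--Frobenius then forces $|\Phi_\pm\rangle$ to be proportional to $|\Psi_\pm\rangle$, whence $\Theta_\pm=\Theta_n$; consequently $\Theta_n$ is the largest eigenvalue of $\mathcal T$.

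The only genuinely delicate point is the claim that $\mathcal T_\pm$ is \emph{strictly} positive rather than merely irreducible and nonnegative. This uses $d\neq 0$, which guarantees that both admissible horizontal configurations actually contribute even between rows of different magnetisation, together with the positivity of all four weights; for the six-vertex limit $d\to 0$ this fails and one can only conclude that $\mathcal T_\pm$ is irreducible. Apart from this observation, the argument is a direct combination of \cref{thm:GS,thm:MainResult} with the Perron--Frobenius theorem.
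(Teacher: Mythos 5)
Your proposal is correct and follows essentially the same route as the paper: restrict $\mathcal T$ to the two spin-parity sectors, show $\mathcal T_\pm$ is a positive matrix, apply the Perron--Frobenius theorem, and identify the Perron eigenvector with $|\Psi_\pm\rangle$ from \cref{thm:GS} via \cref{thm:MainResult}. The only difference is that you spell out the combinatorial reason why the entries of $\mathcal T_\pm$ are strictly positive (two admissible horizontal configurations per pair of equal-parity rows), where the paper simply cites \cite{baxterbook}; that added detail is correct and harmless.
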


We notice that this result immediately implies that for positive vertex weights the free energy
 per site of the supersymmetric eight-vertex model is given by
\begin{equation}
  f = -\lim_{n\to \infty} \frac{1}{2n+1}\ln \Theta_n = -\ln(a+b).
\end{equation}
As expected, this agrees with Baxter's result for the free energy per site of the eight-vertex model \cite{baxterbook}, if specialised to the supersymmetric case.

\section{Conclusion}
\label{sec:Conclusion}
In this article, we have proven Stroganov's conjecture \cite{stroganov:01,stroganov:01_2} about the transfer matrix of the supersymmetric eight-vertex model on the square lattice. It states that this transfer matrix with $L=2n+1$ vertical lines and periodic boundary conditions possesses the simple eigenvalue $\Theta_n = (a+b)^{2n+1}$. Furthermore, we have shown that the corresponding eigenspace is spanned by the ground states of a related XYZ Hamiltonian. The proof relies on the lattice supersymmetry of this Hamiltonian. An important ingredient of the proof is the commutation relation between the supercharge and the transfer matrix. This result is interesting in itself as it illustrates a relation between lattice supersymmetry and quantum integrability. We have used it to reduce the computation of expectation values of the transfer matrix with respect to the spin-chain ground states to the evaluation of simple matrix elements.

Let us briefly discuss two generalisations of the present work. First, an obvious generalisation is the extension of the present results to other boundary conditions. In a companion paper \cite{hagendorf:17_3}, we consider the supersymmetric eight-vertex model on a strip and the related XYZ chain with open boundary conditions.
We classify all boundary terms that are compatible with a lattice supersymmetry. Furthermore, we determine the subset of boundary terms for which supersymmetry singlets exist. In the case of open spin chains these are the Hamiltonian's ground states \cite{hagendorf:17} (without any restriction to a particular subsector of the spin-chain Hilbert space). Adapting the strategy of the present article, we compute the action of the transfer matrix on the supersymmetry singlets and determine the corresponding transfer-matrix eigenvalue.
Second, one may consider the transfer matrix of the inhomogeneous eight-vertex model with vertex weights that locally fulfil \eqref{eqn:CL8V}. In the case of periodic boundary conditions with $L=2n+1$, a simple and explicit expression of a doubly degenerate transfer-matrix eigenvalue that reduces to $(a+b)^{2n+1}$ in the homogeneous limit has been conjectured \cite{razumov:10}. This conjecture allows one to determine interesting properties of the states $|\Psi_n\rangle$ and $|\bar \Psi_n\rangle$ such as sum rules \cite{zinnjustin:13} and special components \cite{hagendorf:17_4}. A proof of this conjecture would therefore be interesting.
   
\subsubsection*{Acknowledgements}
   This work is supported by the Belgian Interuniversity Attraction Poles Program P7/18 through the network DYGEST (Dynamics, Geometry and Statistical Physics). C.H. acknowledges hospitality and support from the Mathematical Research Institute \textsc{Matrix} and from the program ``Combinatorics, statistical mechanics and conformal field theory'', where this work was completed. We are very grateful to Alexi Morin-Duchesne for his interesting remarks and stimulating suggestions that helped us to improve our manuscript. Furthermore, we would like to thank Robert Weston for an interesting discussion.

\end{document}